\pgfplotsset{compat=1.14}
\tikzstyle{state}+=[minimum size = 6mm, inner sep=0,outer sep=1]
\tikzset{->,>=stealth'}
\colorlet{disabled}{lightgray}
\tikzstyle{state}=[draw,rectangle,inner sep=5pt,rounded corners=2pt]
\tikzstyle{action}=[font=\small,inner sep=0pt,outer sep=3pt]
\tikzstyle{actionnode}=[circle,draw=black,fill=black,minimum size=1mm,inner sep=0,outer sep=0]
\tikzstyle{actionedge}=[draw,-]
\tikzstyle{prob}=[font=\scriptsize,inner sep=0pt,outer sep=1pt]
\tikzstyle{probedge}=[draw,->]
\tikzstyle{directedge}=[draw,->]
\tikzset{chainarrow/.tip={Stealth[length=3pt]}}
\tikzset{>=chainarrow}
\newcounter{algorithmicH}% New algorithmic-like hyperref counter
\let\oldalgorithmic\algorithmic
\renewcommand{\algorithmic}{%
  \stepcounter{algorithmicH}% Step counter
  \oldalgorithmic}% Do what was always done with algorithmic environment
\renewcommand{\theHALG@line}{ALG@line.\thealgorithmicH.\arabic{ALG@line}}
\DeclarePairedDelimiter{\delimabs}{\lvert}{\rvert}
\DeclarePairedDelimiter{\delimcardinality}{\lvert}{\rvert}
\DeclarePairedDelimiter{\delimnorm}{\lVert}{\rVert}
\NewDocumentCommand{\abs}{sm}{\IfBooleanTF{#1}{\delimabs*{#2}}{\delimabs{#2}}}
\NewDocumentCommand{\cardinality}{sm}{\IfBooleanTF{#1}{\delimcardinality*{#2}}{\delimcardinality{#2}}}
\NewDocumentCommand{\norm}{sm}{\IfBooleanTF{#1}{\delimnorm*{#2}}{\delimnorm{#2}}}
\NewDocumentCommand{\powerset}{r()}{2^{#1}}
\newcommand{\setcomplement}[1]{\overline{#1}}
\newcommand{\indicator}[1]{\mathbbm{1}_{#1}}
\newcommand{\lipschitz}{L}
\newcommand{\unionSym}{\cup}
\newcommand{\unionBin}{\mathbin{\unionSym}}
\newcommand{\intersectionSym}{\cap}
\newcommand{\intersectionBin}{\mathbin{\intersectionSym}}
\newcommand{\UnionSym}{\bigcup}
\newcommand{\union}{\unionBin}
\newcommand{\intersection}{\intersectionBin}
\newcommand{\Union}{\UnionSym}
\newcommand{\Naturals}{\mathbb{N}}
\newcommand{\Reals}{\mathbb{R}}
\DeclareMathOperator{\support}{supp}
\DeclareMathOperator*{\argmax}{arg\,max}
\DeclareMathOperator{\metric}{d}
\newcommand{\measure}{\mu}
\NewDocumentCommand{\Measures}{d()}{\IfNoValueTF{#1}{\Pi}{\Pi(#1)}}
\NewDocumentCommand{\integral}{d<> m m}{\IfNoValueTF{#1}{\int #2\,d#3}{\int_{#1} #2\,d#3}}
\NewDocumentCommand{\Expectation}{s d[]}{\IfNoValueTF{#2}{\mathbb{E}}{\mathbb{E}\IfBooleanTF{#1}{\left[#2\right]}{[#2]}}}
\NewDocumentCommand{\Probability}{s d[]}{\IfNoValueTF{#2}{\mathbb{P}}{\mathbb{P}\IfBooleanTF{#1}{\left[#2\right]}{[#2]}}}
\newcommand{\sigmaalgebra}{\Sigma}
\newcommand{\totalvariation}{\delta_{TV}}
\newcommand{\ltrue}{\mathsf{true}}
\newcommand{\lfalse}{\mathsf{false}}
\newcommand{\MDP}{\mathcal{M}}
\newcommand{\States}{S}
\newcommand{\initialstate}{s_{0}}
\newcommand{\Actions}{Act}
\NewDocumentCommand{\stateactions}{d()}{\IfNoValueTF{#1}{{Av}}{{Av}(#1)}}
\NewDocumentCommand{\mctransitions}{d()}{\IfNoValueTF{#1}{\delta}{\delta(#1)}}
\NewDocumentCommand{\mdptransitions}{d()}{\IfNoValueTF{#1}{\Delta}{\Delta(#1)}}
\NewDocumentCommand{\actionstate}{r<> r()}{s_{#1}(#2)}
\newcommand{\sigmaalgebrastates}{\sigmaalgebra_{\States}}
\newcommand{\sigmaalgebraactions}{\sigmaalgebra_{\Actions}}
\newcommand{\metricstates}{\metric_{\States}}
\newcommand{\metricactions}{\metric_{\Actions}}
\newcommand{\metricproduct}{\metric_{\times}}
\newcommand{\lipschitzstates}{C_{\States}}
\newcommand{\lipschitzproduct}{C_{\times}}
\newcommand{\sampled}{\mathsf{Sampled}}
\newcommand{\infinitepath}{\rho}
\newcommand{\finitepath}{\varrho}
\NewDocumentCommand{\Infinitepaths}{d<>}{\IfNoValueTF{#1}{\mathsf{Paths}}{\mathsf{Paths}_{#1}}}
\NewDocumentCommand{\Finitepaths}{d<>}{\IfNoValueTF{#1}{\mathsf{FPaths}}{\mathsf{FPaths}_{#1}}}
\newcommand{\strategy}{\pi}
\NewDocumentCommand{\Strategies}{d<>}{\IfNoValueTF{#1}{\Pi}{\Pi_{#1}}}
\NewDocumentCommand{\StrategiesM}{d<>}{\IfNoValueTF{#1}{\Pi}{\Pi_{#1}}^{\mathsf{M}}}
\NewDocumentCommand{\StrategiesMD}{d<>}{\IfNoValueTF{#1}{\Pi}{\Pi_{#1}}^{\mathsf{MD}}}
\newcommand{\last}[1]{last(#1)}
\DeclareMathOperator{\SccsOp}{SCC}
\DeclareMathOperator{\BsccsOp}{BSCC}
\DeclareMathOperator{\EcsOp}{EC}
\DeclareMathOperator{\MecsOp}{MEC}
\NewDocumentCommand{\Sccs}{r()}{\SccsOp(#1)}
\NewDocumentCommand{\Bsccs}{r()}{\BsccsOp(#1)}
\NewDocumentCommand{\Ecs}{d()}{\IfNoValueTF{#1}{\EcsOp}{\EcsOp(#1)}}
\NewDocumentCommand{\Mecs}{d()}{\IfNoValueTF{#1}{\MecsOp}{\MecsOp(#1)}}
\NewDocumentCommand{\ProbabilityMC}{s r<> d[]}{\mathsf{Pr}_{#2}\IfNoValueF{#3}{\IfBooleanTF{#1}{\!\left[#3\right]\!}{[#3]}}}
\NewDocumentCommand{\ProbabilityMDP}{s r<> r<> d[]}{\mathsf{Pr}_{#2}^{#3}\IfNoValueF{#4}{\IfBooleanTF{#1}{\!\left[#4\right]\!}{[#4]}}}
\NewDocumentCommand{\ProbabilityMDPmax}{s r<> d[]}{\mathsf{Pr}_{#2}^{\max}\IfNoValueF{#3}{\IfBooleanTF{#1}{\!\left[#3\right]\!}{[#3]}}}
\NewDocumentCommand{\ProbabilityMDPsup}{s r<> d[]}{\mathsf{Pr}_{#2}^{\sup}\IfNoValueF{#3}{\IfBooleanTF{#1}{\!\left[#3\right]\!}{[#3]}}}
\NewDocumentCommand{\ExpectedSum}{m m}{#1\langle#2\rangle}
\NewDocumentCommand{\ExpectedSumMDP}{m m m m}{\ExpectedSum{#1(#2,#3)}{#4}}
\newcommand{\reach}{\Diamond}
\newcommand{\val}{\mathcal{V}}
\newcommand{\upperbound}{\mathsf{U}}
\newcommand{\lowerbound}{\mathsf{L}}
\newcommand{\bounddifference}{\mathrm{Diff}}
\newcommand{\upperboundstored}{\widehat{\mathsf{U}}}
\newcommand{\lowerboundstored}{\widehat{\mathsf{L}}}
\newcommand{\targetset}{T}
\newcommand{\sinkset}{R}
\newcommand{\algostep}{\mathsf{t}}
\newcommand{\getpair}{\textsc{GetPair}}
\newcommand{\underapprox}{\textsc{Approx}_\leq}
\newcommand{\overapprox}{\textsc{Approx}_\geq}
\newcommand{\getprecision}{\textsc{Precision}(\algostep)}
\newcommand{\ltlUntil}{\mathbin{\mathbf{U}}}
\newcommand{\ltlNext}{\mathbin{\mathbf{X}}}
\newcommand{\ltlFinally}{\mathop{\mathbf{F}}}
\newcommand{\ltlGlobally}{\mathop{\mathbf{G}}}
	\providecommand\BibTeX{{%
			\normalfont B\kern-0.5em{\scshape i\kern-0.25em b}\kern-0.8em\TeX}}}
\newcommand{\arxivref}[2]{\iftoggle{arxiv}{#1}{#2}}
\newcommand{\appendixref}{App.~}
\title{Anytime Guarantees for Reachability in Uncountable Markov Decision Processes}
\author{Kush Grover}{Technical University of Munich, Germany}{kush.grover@in.tum.de}{https://orcid.org/0000-0003-4575-1302}{}
\author{Jan K\v{r}et{\'i}nsk{\'y}}{Technical University of Munich, Germany}{jan.kretinsky@in.tum.de}{https://orcid.org/0000-0002-8122-2881}{}
\author{Tobias Meggendorfer}{Institute of Science and Technology Austria, Vienna, Austria}{tobias.meggendorfer@ist.ac.at}{https://orcid.org/0000-0002-1712-2165}{}
\author{Maximilian Weininger}{Technical University of Munich, Germany}{maxi.weininger@tum.de}{https://orcid.org/0000-0002-0163-2152}{}
\authorrunning{K. Grover, J. K\v{r}et{\'i}nsk{\'y}, T. Meggendorfer, and M. Weininger}
\keywords{Uncountable system, Markov decision process, discrete-time Markov control process, probabilistic verification, anytime guarantee}
\begin{document}

\maketitle

\begin{abstract}
We consider the problem of approximating the reachability probabilities in Markov decision processes (MDP) with uncountable (continuous) state and action spaces.
While there are algorithms that, for special classes of such MDP, provide a sequence of approximations converging to the true value in the limit, our aim is to obtain an algorithm with guarantees on the precision of the approximation.

As this problem is undecidable in general, assumptions on the MDP are necessary.
Our main contribution is to identify sufficient assumptions that are as weak as possible, thus approaching the \enquote{boundary} of which systems can be correctly and reliably analyzed.
To this end, we also argue why each of our assumptions is necessary for algorithms based on processing finitely many observations.

We present two solution variants.
The first one provides converging lower bounds under weaker assumptions than typical ones from previous works concerned with guarantees.
The second one then utilizes stronger assumptions to additionally provide converging upper bounds.
Altogether, we obtain an \emph{anytime} algorithm, i.e. yielding a sequence of approximants with known and iteratively improving precision, converging to the true value in the limit.
Besides, due to the generality of our assumptions, our algorithms are very general templates, readily allowing for various heuristics from literature in contrast to, e.g., a specific discretization algorithm.
Our theoretical contribution thus paves the way for future practical improvements without sacrificing correctness guarantees.

\end{abstract}

\setlength{\abovedisplayskip}{0.5\abovedisplayskip}
\setlength{\abovedisplayshortskip}{0.5\abovedisplayshortskip}
\setlength{\belowdisplayskip}{0.5\belowdisplayskip}
\setlength{\belowdisplayshortskip}{0.5\belowdisplayshortskip}

\newpage

%\clearpage
\section{Introduction} \label{sec:intro}
%
%\paragraph{Continuous Markov decision processes}
The standard formalism for modelling systems with both non-deterministic and probabilistic behaviour are Markov decision processes (MDP) \cite{DBLP:books/wi/Puterman94}. 
In the context of many applications such as cyber-physical systems, states and actions are used to model real-valued phenomena like position or throttle.
Consequently, the state space and the action space may be uncountably infinite.
For example, the intervals $[a, b] \times [c, d] \subseteq \Reals^2$ can model a safe area for a robot to move in or a set of available control inputs such as acceleration and steering angle.
This gives rise to MDP with uncountable state- and action-spaces (sometimes called controlled discrete-time Markov process \cite{DBLP:conf/hybrid/TkachevMKA13,DBLP:journals/iandc/TkachevMKA17} or discrete-time Markov control process \cite{chatterjee2011maximizing,hernandez2012discrete}), with applications ranging from modelling a Mars rover \cite{DBLP:journals/corr/abs-1301-0559,DBLP:journals/corr/abs-1902-00778}, over water reservoir control \cite{lamond2002water} and warehouse storage management \cite{mahootchi2009storage}, to energy control \cite{DBLP:conf/hybrid/TkachevMKA13}, and many more~\cite{peskir2006optimal}.

Although systems modelled by MDP are often safety-critical, the analysis of uncountable systems is so complex that practical approaches for verification and controller synthesis are usually based on \enquote{best effort} learning techniques, for example \emph{reinforcement learning}.
While efficient in practice, these methods guarantee, even in the best case, convergence to the true result only in the limit, e.g.\ \cite{DBLP:conf/icml/MeloMR08}, or for increasingly precise discretization, e.g.\ \cite{DBLP:conf/hybrid/TkachevMKA13,DBLP:conf/atva/JaegerJLLST19}.
In line with the tradition of learning and to make the analysis more feasible, the typical objectives considered for MDP are either finite-horizon \cite{DBLP:conf/aaai/LiL05,DBLP:journals/automatica/AbatePLS08} or discounted properties \cite{DBLP:conf/uai/GuestrinHK04,DBLP:books/sp/12/Hasselt12,DBLP:journals/tac/HaskellJSY20}, together with restrictive assumptions.
Note that when it comes to approximation, discounted properties effectively are finite-horizon.
In contrast, ensuring safety of a reactive system or a certain probability to satisfy its mission goals requires an \emph{unbounded} horizon and reduces to optimizing the reachability probabilities.
Moreover, the safety-critical context requires \emph{reliable} bounds on the probability, not an approximation with \emph{unknown} precision.

In this paper, we provide the first provably correct anytime algorithm for (unbounded) reachability in uncountable MDP.
As an \emph{anytime} algorithm, it can at every step of the execution return correct lower and upper bounds on the true value.
Moreover, these bounds gradually converge to the true value, allowing approximation up to an arbitrary precision.
Since the problem is undecidable, the core of our contribution is identifying sufficient conditions on the uncountable MDP to allow for approximation.

Our \emph{primary goal} is to provide \emph{conditions as weak as possible}, thereby pushing towards the boundary of which systems can be analyzed provably correctly.
To this end, we do not rely on any particular representation of the system.
Nonetheless, for classical scenarios, and, in particular, for finite MDP, our conditions are mostly satisfied trivially.

Our \emph{secondary goal} is to derive the respective algorithms as an \emph{extension of value iteration} (VI) \cite{howard1960dynamic,DBLP:books/wi/Puterman94}, while \emph{avoiding drawbacks of discretization}-based approaches.
VI is a de facto standard method for numerical analysis of finite MDP, in particular with reachability objectives, regarded as practically efficient and allowing for heuristics avoiding the exploration of the complete state space, e.g. \cite{DBLP:conf/atva/BrazdilCCFKKPU14}.
Interestingly, even for finite MDP, anytime VI algorithms with precision guarantees are quite recent \cite{DBLP:conf/atva/BrazdilCCFKKPU14,DBLP:journals/tcs/HaddadM18,DBLP:conf/cav/Baier0L0W17,DBLP:conf/cav/QuatmannK18,DBLP:conf/cav/HartmannsK20}.
Previous to that, the most used model checkers could return arbitrarily wrong results \cite{DBLP:journals/tcs/HaddadM18}.
Providing VI with precision guarantees for general uncountable MDP is thus worthwhile on its own.
Finally, while discretization is conceptually simple, we prefer to provide a solution that avoids the need to introduce arbitrary boundaries through gridding the whole state space and, moreover, instead utilizes information from one \enquote{cell} of the grid in other places, too.

To summarize, while algorithmic aspects form an important motivation, our primary contribution is theoretical: an explicit and complete set of generic assumptions allowing for guarantees, disregarding practical efficiency at this point.
Consequently, while our approach lays foundations for further, more tailored approaches, it is not to be seen as a competitor to the existing practical, best-effort techniques, as these aim for a completely different goal.

%\paragraph*{Our contribution} %
\noindent
\textbf{Our Contribution}
In this work, we provide the following:
\begin{description}
	\item[\cref{sec:cvi}:]
	A set of assumptions that allow for computing converging \emph{lower} bounds on the reachability probability in MDP with uncountable state and action spaces.
	We discuss in detail why they are weaker than usual, necessary, and applicable to typically considered systems.
	With these assumptions, we extend the standard (convergent but precision-ignorant) VI to this general setting.

	\item[\cref{sec:cbrtdp}:]
	An additional set of assumptions that yield the first \emph{anytime} algorithm, i.e.\ with provable bounds on the precision/error of the result, converging to 0.
	We combine the preceding algorithm with the technique of \emph{bounded real-time dynamic programming (BRTDP)} \cite{DBLP:conf/icml/McMahanLG05} and provide also converging \emph{upper} bounds on the reachability probability.

	\item[\cref{sec:discussion}:]
	A discussion of theoretical extensions and practical applications. %\todo{Removed "implementation"}
	%Several extensions and a proof-of-concept implementation.% accompanied by an evaluation.
\end{description}

%\paragraph*{Related work}
\noindent
\textbf{Related work}
%\todo{Check that the itemize fits the flow and doesn't duplicate information too much.}
For detailed \emph{theoretical} treatment of reachability and related problems on uncountable MDP, see e.g.\ \cite{DBLP:journals/iandc/TkachevMKA17,chatterjee2011maximizing}.
Reachability on uncountable MDP generalizes numerous problems known to be undecidable.
For example, we can encode the halting problem of (probabilistic) Turing machines by encoding the tape content as real value.
Similarly, almost-sure termination of probabilistic programs (undecidable \cite{DBLP:conf/mfcs/KaminskiK15}) is a special case of reachability on general uncountable MDP (see e.g.\ \cite{DBLP:conf/vmcai/FuC19}).
As precise reachability analysis is undecidable even for non-stochastic linear hybrid systems \cite{DBLP:conf/stoc/HenzingerKPV95}, many works turn their attention to more relaxed notions such as $\delta$-reachability, e.g.\ \cite{DBLP:conf/hybrid/ShmarovZ15}, and/or employ many assumptions. % ; to avoid this issue for continuous stochastic processes

In order to obtain precision bounds, we assume that the \emph{value} function, mapping states to their reachability probability, is \emph{Lipschitz continuous} (and that we know the Lipschitz constant).
This is slightly \emph{weaker} than the classical approach of assuming Lipschitz continuity of the \emph{transition} function (and knowledge of the constant), e.g.\ \cite{DBLP:journals/ejcon/AbateKLP10,DBLP:conf/qest/SoudjaniA11}.
In particular, these assumptions (i)~imply our assumption (as we show in \arxivref{\cref{app:assumptions:lipschitz:implication}}{\cite[\appendixref{}B.2.1]{techreport}}) and (ii)~are used even in the simpler settings of finite-horizon and discounted reward scenarios \cite{bertsekas1975convergence,DBLP:journals/ejcon/AbateKLP10,DBLP:conf/qest/SoudjaniA11,DBLP:conf/hybrid/TkachevMKA13} or even more restricted settings to obtain practical efficiency, e.g.\ \cite{DBLP:conf/qest/LalP18}.
In contrast to our approach, they are not anytime algorithms and require treatment of the whole state space.

To provide context, we outline how continuity is used (explicitly or implicitly) in related work and mention their respective results.
%In particular, all of them use some form of continuity to varying extents.
%As expected, stronger continuity assumptions lead to stronger guarantees.
%We can roughly group the works as follows:
%\begin{itemize}
%	\item
	Firstly,~\cite{DBLP:journals/tac/HaskellJSY20,DBLP:conf/uai/SharmaJ019} assume Lipschitz continuity, \emph{but not} explicit knowledge of the constant.
	In essence, these approaches solve the problem by successively increasing internal parameters.% of their algorithms.
	The parameters then eventually cross a bound implied by the Lipschitz constant, yielding an \enquote{eventual correctness}.
	In particular, they provide \enquote{convergence in the limit} or \enquote{probably approximately correct} results, but no bounds on the error or the convergence rate; these would depend on knowledge of the constant.
	
%	\item 
	Secondly, \cite{DBLP:conf/uai/GuestrinHK04,DBLP:conf/icml/MeloMR08,DBLP:journals/ejcon/AbateKLP10,DBLP:conf/qest/SoudjaniA11,DBLP:conf/hybrid/TkachevMKA13} (and our work) assume Lipschitz continuity \emph{and} knowledge of the constant.
	Relying on the constant being provided externally, these works derive guarantees.
	Previously, the guarantees given are weaker than our convergent anytime bounds:
	Either convergence in the limit~\cite{DBLP:conf/icml/MeloMR08} or a bound on a discretization error, relativized to sub-optimal strategies~\cite{DBLP:conf/uai/GuestrinHK04} or bounded horizon~\cite{DBLP:journals/ejcon/AbateKLP10,DBLP:conf/qest/SoudjaniA11,DBLP:conf/hybrid/TkachevMKA13}.

%	Instead, we take this slightly more abstract approach to phrase our framework as general as possible.
	
%	\item
%	Assuming a restricted model and deriving an explicit bound on the Lipschitz constant based on that, e.g.~\cite{DBLP:journals/ejcon/AbateKLP10,DBLP:conf/qest/SoudjaniA11} restricting the transition structure.
	%Here, our work would be directly applicable.\todo{What is this sentence supposed to mean?}
%	\item \emph{Discretization} approaches provide a quite general method. However, obtaining any bounds on the discretization error requires continuity assumptions~\cite{DBLP:conf/hybrid/AbateAPLS07}.
%	Note that several of the above mentioned works employ discretization.
%\end{itemize}
%
Several of the above mentioned works employ \emph{discretization}~\cite{DBLP:conf/uai/GuestrinHK04,DBLP:journals/ejcon/AbateKLP10,DBLP:conf/qest/SoudjaniA11,DBLP:conf/hybrid/TkachevMKA13}. This method is quite general, but obtaining any bounds on the error requires continuity assumptions~\cite{DBLP:conf/hybrid/AbateAPLS07}.
Further, there are works that use other assumptions:
\cite{DBLP:conf/atal/HasanbeigAK19,DBLP:journals/corr/abs-1902-00778} use \emph{reinforcement learning} methods to tackle reachability and more general problems, without any continuity assumption.
However, they do not provide any guarantees.
See \cite{DBLP:books/sp/12/Hasselt12} for a detailed exposition of similar approaches.
Assuming an abstraction is given, \emph{abstraction and bisimulation} approaches, e.g.\ \cite{DBLP:journals/siamco/HaesaertSA17,DBLP:conf/adhs/HaesaertSA18}, provide guarantees, but only on the lower bounds.
With significant assumptions on the system's structure, \emph{symbolic} approaches \cite{DBLP:conf/aaai/LiL05,DBLP:conf/bracis/ViannaSB14,DBLP:conf/uai/SannerDB11,DBLP:conf/uai/FengDMW04} may even obtain exact solutions.

\section{Preliminaries} \label{sec:prelim}

In this section, we recall basics of probabilistic systems and set up the notation.
As usual, $\Naturals$ and $\Reals$ refer to the (positive) natural numbers and real numbers, respectively.
%Given two real numbers $a, b \in \Reals$ with $a \leq b$, $[a, b] \subseteq \Reals$ denotes the set of all real numbers between $a$ and $b$ inclusively.
%For any set $S$, we use $\setcomplement{S}$ to denote its complement and $\powerset(S)$ for its power set.
For a set $S$, $\indicator{S}$ denotes its \emph{characteristic function}, i.e.\ $\indicator{S}(x) = 1$ if $x \in S$ and $0$ otherwise.
We write $S^\star$ and $S^\omega$ to refer to the set of finite and infinite sequences comprising elements of $S$, respectively.

We assume familiarity with basic notions of measure theory, e.g.\ \emph{measurable set} or \emph{measurable function}, as well as probability theory, e.g.\ \emph{probability spaces} and \emph{measures} \cite{billingsley2012probability}.
For a measure space $X$ with sigma-algebra $\sigmaalgebra_X$, $\Measures(X)$ denotes the set of all probability measures on $X$.
For a measure $\mu \in \Measures(X)$, we write $\mu(Y) = \integral{\indicator{Y}}{\mu}$ to denote the mass of a measurable set $Y \in \sigmaalgebra_X$ (also called \emph{event}).
For two probability measures $\mu$ and $\nu$, the \emph{total variation distance} is defined as $\totalvariation(\mu, \nu) := 2 \cdot \sup_{Y \in \Sigma_X} \abs{\mu(Y) - \nu(Y)}$.
Some event happens \emph{almost surely} (a.s.) w.r.t.\ some measure $\mu$ if it happens with probability $1$.
We write $\support(\mu)$ to denote the \emph{support} of the probability measure $\mu$.

\begin{remark}
	It is surprisingly difficult to give a well-defined notion of support for measures in general.
	%This definition is not of fundamental interest for this work; %, it may safely be skipped.
	Intuitively, $\support(\mu)$ describes the \enquote{smallest} set which $\mu$ assigns a value of $1$.
	However, this is not well-defined for general measures.
	We discuss these issues and a proper definition in \arxivref{\cref{app:proofs}}{\cite[\appendixref{}E]{techreport}}.
	Throughout this work, similar subtle issues related to measure theory arise.
	For the sake of readability, these are mostly delegated to footnotes or the \arxivref{appendix}{appendix of the full version~\cite{techreport}}, and readers may safely skip over these points.
\end{remark}
%
%\subsection{Markov Systems}
%
We work with \emph{Markov decision processes} (MDP) \cite{DBLP:books/wi/Puterman94}, a widely used model to capture both non-determinism and probability. We consider uncountable state and action spaces.
%First, we introduce Markov chains (MCs), which are purely stochastic and can be seen as a special case of MDP.
%%
%\begin{definition}
%	A \emph{Markov Chain (MC)} is a tuple $\MC = (\States, \mctransitions)$, where
%	\begin{itemize}
%		\item $\States$ is a compact set of \emph{states} with $\sigma$-algebra $\sigmaalgebrastates$, and
%		\item $\mctransitions : \States \to \Measures(\States)$ is a \emph{transition function} that for each state $s$ yields a probability measure over successor states.
%	\end{itemize}
%\end{definition}
%
\begin{definition}
	A \emph{(continuous-space, discrete-time) Markov decision process (MDP)} is a tuple $\MDP = (\States, \Actions, \stateactions, \mdptransitions)$, where $\States$ is a compact set of \emph{states} (with topology $\mathcal{T}_\States$ and Borel $\sigma$-algebra $\sigmaalgebrastates = \mathfrak{B}(\mathcal{T}_\States)$), $\Actions$ is a compact set of \emph{actions} (with topology $\mathcal{T}_{\Actions}$ and Borel $\sigma$-algebra $\sigmaalgebraactions = \mathfrak{B}(\mathcal{T}_{\Actions})$), $\stateactions\colon \States \to \sigmaalgebraactions \setminus \{\emptyset\}$ assigns to every state a non-empty, measurable, and compact set of \emph{available actions}, and $\mdptransitions\colon \States \times \Actions \to \Measures(\States)$ is a \emph{transition function} that for each state $s$ and (available) action $a \in \stateactions(s)$ yields a probability measure over successor states (i.e.\ a Markov Kernel).
%	Moreover, we require that the set of \emph{state-action pairs} $\{(s, a) \mid s \in \States, a \in \stateactions(s)\}$ is compact.
	An MDP is called \emph{finite} if $\cardinality{\States} < \infty$ and $\cardinality{\Actions} < \infty$.
\end{definition}
See \cite[Sec.~2.3]{DBLP:books/wi/Puterman94} and \cite[Chp.~9]{bertsekas1978stochastic} for a more detailed discussion on the technical considerations arising from uncountable state and action spaces.
Note that we assume the set of available actions to be non-empty.
This means that the system can never get \enquote{stuck} in a degenerate state without successors.
\emph{Markov chains} are a special case of MDP where $\cardinality{\stateactions(s)} = 1$ for all $s \in \States$, i.e.\ a completely probabilistic system without any non-determinism.
Our presented methods thus are directly applicable to Markov chains as well.

%For ease of notation, we overload functions mapping to measures $f: Y \to \Measures(X)$ by $f: Y \times \sigmaalgebra_X \to [0, 1]$, where $f(y, X) := f(y)(X)$.
%For example, instead of $\mdptransitions(s, a)(X)$ we write $\mdptransitions(s, a, X)$ to denote the probabilities of transitioning into the set $X \subseteq \States$.
%Furthermore, given a measure $\measure \in \Measures(Y)$ and a measurable function $g : Y \to \Reals$ mapping elements of a set $Y$ to real numbers, we write $\ExpectedSum{\measure}{g} := \integral{g(s)}{\measure(s)}$ to denote the integral of $f$ with respect to $\measure$.
%For example, $\ExpectedSumMDP{\mdptransitions}{s}{a}{g}$ denotes the expected value of $g : \States \to \Reals$ over the successors of $s$ under action $a$.
%Finally, for some set of state $S' \subseteq \States$ and function $\stateactions' : S' \to \Actions$, we write $S' \times \stateactions' = \{(s, a) \mid s \in S', a \in \stateactions'(s)\}$ to denote the set of state-action pairs with states from $S'$ under $\stateactions'$.

Given a measure $\measure \in \Measures(X)$ and a measurable function $f \colon X \to \Reals$ mapping elements of a set $X$ to real numbers, we write $\ExpectedSum{\measure}{f} := \integral{f(x)}{\measure(x)}$ to denote the integral of $f$ with respect to $\measure$.
For example, $\ExpectedSumMDP{\mdptransitions}{s}{a}{f}$ denotes the expected value $\mathbb E_{s'\sim\Delta(s,a)}f(s')$ of $f \colon \States \to \Reals$ over the successors of $s$ under action $a$.
Moreover, abusing notation, for some set of state $S' \subseteq \States$ and function $\stateactions' \colon S' \to \Actions$, we write $S' \times \stateactions' = \{(s, a) \mid s \in S', a \in \stateactions'(s)\}$ to denote the set of state-action pairs with states from $S'$ under $\stateactions'$.

%\paragraph*{Paths \& Strategies}
%
%An \emph{infinite path} $\infinitepath$ in a Markov chain is an infinite sequence $\infinitepath = s_1 s_2 \cdots \in \States^\omega$, such that for every $i \in \Naturals$ we have that $s_{i+1} \in \support(\mctransitions(s_i))$.
%A \emph{finite path} (or \emph{history}) $\finitepath = s_1 s_2 \dots s_n \in \States^\star$ is a non-empty, finite prefix of an infinite path of length $\cardinality{\finitepath} = n$, ending in some state $s_n$, denoted by $\last{\finitepath}$.
%For simplicity, we define $\cardinality{\infinitepath} = \infty$ for infinite paths $\infinitepath$.
%We use $\infinitepath(i)$ and $\finitepath(i)$ to refer to the $i$-th state $s_i$ in a given (in)finite path.
%A state $s$ \emph{occurs} in an (in)finite path $\infinitepath$, denoted by $s \in \infinitepath$, if there exists an $i \leq \cardinality{\infinitepath}$ such that $s = \infinitepath(i)$, analogously for finite paths.
%We denote the set of all finite (infinite) paths of an Markov chain $\MC$ by $\Finitepaths<\MC>$ ($\Infinitepaths<\MC>$).
%Further, we use $\Finitepaths<\MC, s>$ ($\Infinitepaths<\MC, s>$) to refer to all (in)finite paths starting in state $s \in \States$.
%Observe that in general $\Finitepaths<\MC>$ and $\Infinitepaths<\MC>$ are proper subsets of $\States^\star$ and $\States^\omega$, respectively, as we imposed additional constraints.

An \emph{infinite path} in an MDP is some infinite sequence $\infinitepath = s_1 a_1 s_2 a_2 \cdots \in (\States \times \stateactions)^\omega$, such that for every $i \in \Naturals$ we have $s_{i+1} \in \support(\mdptransitions(s_i, a_i))$.
A \emph{finite path} (or \emph{history}) $\finitepath = s_1 a_1 s_2 a_2 \dots s_n \in (\States \times \stateactions)^\star \times \States$ is a non-empty, finite prefix of an infinite path of length $\cardinality{\finitepath} = n$, ending in state $s_n$, denoted by $\last{\finitepath}$.
%For simplicity, we define $\cardinality{\infinitepath} = \infty$ for infinite paths $\infinitepath$.
We use $\infinitepath(i)$ and $\finitepath(i)$ to refer to the $i$-th state in an (in)finite path.
We refer to the set of finite (infinite) paths of an MDP $\MDP$ by $\Finitepaths<\MDP>$ ($\Infinitepaths<\MDP>$).
Analogously, we write $\Finitepaths<\MDP, s>$ ($\Infinitepaths<\MDP, s>$) for all (in)finite paths starting in $s$.

In order to obtain a probability measure, we first need to eliminate the non-determinism.
This is done by a so-called \emph{strategy} (also called \emph{policy}, \emph{controller}, or \emph{scheduler}).
%
%\begin{definition}
	A strategy on an MDP $\MDP = (\States, \Actions, \stateactions, \mdptransitions)$ is a function $\strategy\colon \Finitepaths<\MDP> \to \Measures(\Actions)$, s.t.\ $\support(\strategy(\finitepath)) \subseteq \stateactions(\last{\finitepath})$.
	The set of all strategies is denoted by $\Strategies<\MDP>$.
%\end{definition}
%
Intuitively, a strategy is a \enquote{recipe} describing which step to take in the current state, given the evolution of the system so far.
%Note that the strategy may yield a measure on the actions to be taken next.
%
%We call a strategy $\strategy$ \emph{memoryless} (or \emph{stationary}) if it only depends on $\last{\finitepath}$ for all finite paths $\finitepath$ and we identify it with $\strategy : \States \to \Measures(\Actions)$.
%Similarly, a strategy is called \emph{deterministic}, if it always yields a Dirac measure, i.e.\ picks a single action to be played next, and we identify it with $\strategy : \Finitepaths<\MDP> \to \Actions$.
%Together, \emph{memoryless deterministic} strategies can be treated as functions $\strategy : \States \to \Actions$ mapping each state to an action.
%We write $\Strategies<\MDP>$ to denote the set of all strategies of an MDP $\MDP$, $\StrategiesM<\MDP>$ for memoryless strategies, and $\StrategiesMD<\MDP>$ for all memoryless deterministic strategies. \todo{how much of this do we need?}
%

Given an MDP $\MDP$, a strategy $\strategy \in \Strategies<\MDP>$, and an initial state $\initialstate$, we obtain a measure on the set of infinite paths $\Infinitepaths<\MDP>$, which we denote as $\ProbabilityMDP<\MDP, \initialstate><\strategy>$. %\footnote{We deliberately choose $\Infinitepaths<\MDP>$ as underlying set instead of $\Infinitepaths<\MDP, \initialstate>$ and assign measure zero to all paths which do not begin with $\initialstate$.
%This way, the measures $\ProbabilityMDP<\MDP, \initialstate><\strategy>$ we obtain for different states $\initialstate$ and strategies $\strategy$ \enquote{live} on the same probability space.}.
See \cite[Sec.~2]{DBLP:books/wi/Puterman94} %\cite[Sec.~2.1.6, 2.3.2]{DBLP:books/wi/Puterman94}
for further details.
Thus, given a measurable set $A \subseteq \Infinitepaths<\MDP>$, we can define its maximal probability starting from state $\initialstate$ under any strategy by
$
	\ProbabilityMDPsup<\MDP, \initialstate>[A] := {\sup}_{\strategy \in \Strategies<\MDP>} \ProbabilityMDP<\MDP, \initialstate><\strategy>[A].
$
Depending on the structure of $A$ it may be the case that no optimal strategy exists and we have to resort to the supremum instead of the maximum.
This may already arise for finite MDP, see~\cite{DBLP:journals/lmcs/ChatterjeeKK17}.

%\paragraph*{Reachability}
%
For an MDP $\MDP = (\States, \Actions, \stateactions, \mdptransitions)$ and a set of \emph{target states} $\targetset \subseteq \States$, \emph{(unbounded) reachability} refers to the set $\reach \targetset = \{\infinitepath \in \Infinitepaths<\MDP> \mid \exists i \in \Naturals.~\infinitepath(i) \in \targetset\}$, i.e.\ all paths which eventually reach $\targetset$.
The set $\reach T$ is measurable if $\targetset$ is measurable \cite[Sec.~3.1]{DBLP:conf/hybrid/TkachevMKA13}, \cite[Sec.~2]{DBLP:journals/iandc/TkachevMKA17}.

Now, it is straightforward to define the \emph{maximal reachability problem} of a given set of states.
Given an MDP $\MDP$, target set $\targetset$, and state $\initialstate$, we are interested in computing the maximal probability of eventually reaching $\targetset$, starting in state $\initialstate$.
Formally, we want to compute the \emph{value} of the state $\initialstate$, defined as
$
	\val(\initialstate) := \ProbabilityMDPsup<\MDP, \initialstate>[\reach \targetset] = {\sup}_{\strategy \in \Strategies<\MDP>} \ProbabilityMDP<\MDP, \initialstate><\strategy>[\reach \targetset].
$
This state value function satisfies a straightforward fixed point equation, namely
\begin{equation} \label{eq:value_fixpoint}
	\val(s) = 1 \quad \text{if $s \in \targetset$} \qquad \val(s) = {\sup}_{a \in \stateactions(s)} \ExpectedSumMDP{\mdptransitions}{s}{a}{\val} \qquad \text{otherwise.}
\end{equation}
Moreover, $\val$ is the \emph{smallest} fixed point of this equation \cite[Prop.~9.8, 9.10]{bertsekas1978stochastic}, \cite[Thm.~3]{DBLP:journals/iandc/TkachevMKA17}.
In our approach, we also deal with values of state-action pairs $(s, a) \in \States \times \stateactions$, where $\val(s, a) := \ExpectedSumMDP{\mdptransitions}{s}{a}{\val}$.
Intuitively, this represents the value achieved by choosing action $a$ in state $s$ and then moving optimally.
Clearly, we have that $\val(s) = \sup_{a \in \stateactions(s)} \val(s, a)$.
See \cite[Sec.~4]{DBLP:conf/sfm/ForejtKNP11} for a discussion of reachability on finite MDP and \cite{DBLP:journals/iandc/TkachevMKA17} for the general case.

In this work, we are interested in \emph{approximate solutions} due to the following two reasons.
Firstly, obtaining precise solutions for MDP is difficult already under strict assumptions and undecidable in our general setting.\footnote{For example, one can encode the tape of a Turing machine into the binary representation of a real number and reduce the halting problem to a reachability query.}
We thus resort to approximation, allowing for much lighter assumptions.
Secondly, by considering approximation we are able to apply many different optimization techniques, potentially leading to algorithms which are able to handle real-world systems, which are out of reach for precise algorithms even for finite MDP \cite{DBLP:conf/atva/BrazdilCCFKKPU14}.

We are interested in two types of approximations.
Firstly, we consider approximating the value function in the limit, without knowledge about how close we are to the true value.
This is captured by a semi-decision procedure for queries of the form $\ProbabilityMDPsup<\MDP, s>[\reach \targetset] > \xi$ for a threshold $\xi \in [0, 1]$.
We call this problem \textsf{ApproxLower}.
%It is applicable to a comparatively broad class of MDP. %, only requiring minimal structural properties.
Secondly, we consider the variant where we are given a precision requirement $\varepsilon > 0$ and obtain $\varepsilon$-optimal values $(l, u)$, i.e.\ values with $\val(\initialstate) \in [l, u]$ and $0 \leq u - l < \varepsilon$.
We refer to this variant as \textsf{ApproxBounds}.

\section{Converging Lower Bounds} \label{sec:cvi}

In this section, we present the first set of assumptions, enabling us to compute \emph{converging lower bounds} on the true value, solving the \textsf{ApproxLower} problem.
In \cref{sec:vi-assms}, we discuss each assumption in detail and argue on an intuitive level why it is necessary by means of counterexamples.
With the assumptions in place, in \cref{sec:cvi:algorithm} we then present our first algorithm, also introducing several ideas we employ again in the following section.

Our assumptions and algorithms are motivated by \emph{value iteration} (VI) \cite{howard1960dynamic}, which we briefly outline.
In a nutshell, VI boils down to repeatedly applying an iteration operator to a value vector $v_n$.
For example, the canonical value iteration for reachability on finite MDP starts with $v_0(s) = 1$ for all $s \in \targetset$ and $0$ otherwise and then iterates
\begin{equation}
	v_{n+1}(s) = {\max}_{a \in \Actions(s)} {\sum}_{s' \in \States} \mdptransitions(s, a, s') \cdot v_n(s') \label{eq:vi}
\end{equation}
for all $s \notin \targetset$.
The vector $v_n$ converges monotonically from below to the true value for all states.
We mention two important points.
Firstly, the iteration can be applied \enquote{asynchronously}.
Instead of updating all states in every iteration, we can pick a single state and only update its value.
The values $v_n$ still converge to the correct value as long as all states are updated infinitely often.
Secondly, instead of storing a value per state, we can store a value for each state-action pair and obtain the state value as the maximum of these values.
Both points are a technical detail for finite MDP, however they play an essential role in our uncountable variant.
See \arxivref{\cref{app:insights:finite_vi}}{\cite[\appendixref{}A.1]{techreport}} for more details on VI for finite MDP.

In the uncountable variant of \cref{eq:vi}, $v$ is a function, $\Actions(s)$ is potentially uncountable, and the sum is replaced by integration.
As in this setting the problem is undecidable, naturally we have to employ some assumptions.
Our goal is to sufficiently imitate the essence of \cref{eq:vi}, obtaining convergence without being overly restrictive.
In particular, we want to (i)~represent (an approximation of) $v_n$ using finite memory, (ii)~safely approximate the maximum and integration, and (iii)~select appropriate points to update $v_n$.

\subsection{Assumptions}\label{sec:vi-assms}

Before discussing each assumption in detail, we first put them into context.
As we argue in the following, most of our assumptions typically hold implicitly.
Still, by stating even basic computability assumptions in a form as weak as possible, we avoid \enquote{hidden} assumptions, e.g.\ by assuming that the state space is a subset of $\Reals^d$.
Two of our assumptions are more restrictive, namely \textbf{Assumption C}: \textbf{Value Lipschitz Continuity} (\cref{sec:vi-assms-lipschitz}) and, introduced later, \textbf{Assumption D}: \textbf{Absorption} (\cref{sec:brtdp-assm-absorp}).
However, they are also often used in related works, as we detail in the respective sections.
Moreover, in light of previous results, the necessity of restrictive assumptions is to be expected:
Computing bounds is hard or even undecidable already for very restricted classes.
Aside from the discussion in the introduction, we additionally mention two further cases.
In the setting of probabilistic programs (which are a very special case of uncountable MDP), deciding almost sure termination for a fixed initial state (which is a severely restricted subclass of reachability on uncountable MDP without non-determinism) is an actively researched topic with recent advances, see e.g.~\cite{DBLP:conf/aplas/HuangFC18,DBLP:journals/pacmpl/Huang0CG19}, and shown to be $\Pi_2^0$-complete~\cite{DBLP:conf/mfcs/KaminskiK15}, i.e.\ highly undecidable.
In~\cite{DBLP:journals/jcss/HenzingerKPV98} and the references therein, the authors present (un-)decidability results for hybrid automata, which are a special case of uncountable MDP \emph{without any stochastic dynamics} (flow transitions can be modelled as actions indicating the delay).
As such, it is to be expected that the general class of models we consider has to be pruned very strictly in order to hope for any decidability results.

\begin{remark}
	As already mentioned, we want to provide assumptions which are as general as possible.
	Importantly, we avoid (unnecessarily) assuming any particular representation of the system.
	Our motivation is to ultimately identify the boundary of what is necessary to derive guarantees.
	While our assumptions are motivated by VI and built around \cref{eq:vi}, we note that being able to represent the state values and evaluate (some aspect of) the transition dynamics intuitively are a necessity for \emph{any} method dealing with such systems.
	We do not claim that our framework of assumptions is the only way to approach the problem, instead we provide arguments why it is a sensible way to do so.
\end{remark}

\subsubsection{A: Basic Assumptions (Asm.\ \textbf{A1}-\textbf{A4})}\label{sec:vi-assms-basic}
We first present a set of basic computability assumptions (\textbf{A1}-\textbf{A4}).
These are essential, since for uncountable systems even the simplest computations are intractable without any assumptions.
More specifically, such systems cannot be given explicitly (due to their infinite size), but instead have to be described symbolically by, e.g., differential equations.
Thus, we necessarily require some notion of computability and structural properties for each part of this symbolic description.
And indeed, each assumption essentially corresponds to one part of the MDP description (\textbf{Metric Space} to $\States \times \Actions$, \textbf{Maximum Approximation} to $\stateactions$, \textbf{Transition Approximation} to $\mdptransitions$, and \textbf{Target Computability} to $\targetset$).
They are weak and hold on practically all commonly considered systems (see \arxivref{\cref{app:assumptions:basic}}{\cite[\appendixref{}B.1]{techreport}}).
In particular, finite MDP and discrete components are trivially subsumed by considering the discrete metric.
\begin{description}
	\item[A1: Metric Space]
		$\States$ and $\Actions$ are metric spaces with (computable) metrics $\metricstates$ and $\metricactions$, respectively, and $\metricproduct$ is a compatible\footnote{For two pairs $(s, a)$ and $(s, a')$ we have that $k \cdot \metricactions(a, a') \leq \metricproduct((s, a), (s, a')) \leq K \cdot \metricactions(a, a')$ for some constants $k, K \geq 0$, analogous for $\metricstates$, achieved by, e.g.\ $\metricproduct((s,a), (s', a')) \coloneqq \metricstates(s, s') + \metricactions(a, a')$.} metric on the space of state-action pairs $\States \times \stateactions$,

	\item[A2: Maximum Approximation]
		For each state $s$ and computable Lipschitz $f : \stateactions(s) \to [0, 1]$, the value $\max_{a \in \stateactions(s)} f(a)$ can be under-approximated to arbitrary precision.

	\item[A3: Transition Approximation]
		For each state-action pair $(s,a)$ and Lipschitz $g : \States \to [0, 1]$ which can be under-approximated to arbitrary precision, the successor expectation $\ExpectedSumMDP{\mdptransitions}{s}{a}{g}$ can be under-approximated to arbitrary precision.
		%In particular, if $\Delta(s,a)$ can be approximated, the assumption is readily satisfied.

	\item[A4: Target Computability]
		The target set $\targetset$ is decidable, i.e.\ we are given a computable predicate which, given a state $s$, decides whether $s \in \targetset$.
\end{description}
%As we further explain in \cref{app:details:assumptions:computability}, both approximation assumptions are immediately satisfied when we can sample densely from both the set of available actions $\stateactions(s)$ and the transition distribution $\mdptransitions(s, a)$, respectively.
We denote the approximations for \textbf{A2} and \textbf{A3} by $\underapprox$, i.e.\ given a pair $(s, a)$ and functions $f$, $g$ as in the assumptions, we write (abusing notation) $\underapprox(\max_{a \in \stateactions(s)} f(a), \varepsilon)$ and $\underapprox(\ExpectedSumMDP{\mdptransitions}{s}{a}{g}, \varepsilon)$ for approximation of the respective values up to precision $\varepsilon$, i.e.\ $0 \leq \max_{a \in \stateactions(s)} f(a) - \underapprox(\max_{a \in \stateactions(s)} f(a), \varepsilon) \leq \varepsilon$ and analogous for $\ExpectedSumMDP{\mdptransitions}{s}{a}{g}$.
Note that \textbf{A2} and \textbf{A3} are satisfied if we can sample densely in $\stateactions(s)$ and approximate $\mdptransitions(s, a)$.
\subsubsection{B: Sampling (Asm.\ B.VI)}\label{sec:vi-assms-sampling}
As there are uncountably many states, we are unable to explicitly update all of them at once and instead update values asynchronously.
Moreover, as there may also be uncountably many actions, we instead store and update the values of state-action pairs.
Together, we need to pick state-action pairs to update.
We delegate this choice to a selection mechanism $\getpair$, an oracle for state-action pairs.
We allow for $\getpair$ to be \enquote{stateful}, i.e.\ the sampled state-action pair may depend on previously returned pairs.
This is required in, for example, round-robin or simulation-based approaches.
We only require a basic notion of fairness in order to guarantee that we do not miss out on any information.
Note the additional identifier \textbf{.VI} (\emph{value iteration}) on the assumption name; later on, a similar, but weaker variant (\textbf{B.BRTDP}) is introduced.
\begin{description}
	\item[B.VI: State-Action Sampling]
		Let $\States^{\reach} = \{\last{\finitepath} \mid \finitepath \in \Finitepaths<\MDP, s>\}$ the set of all reachable states.
		Then, for any $\varepsilon > 0$, $s \in \States^{\reach}$, and $a \in \stateactions(s)$ we have that $\getpair$ eventually yields a pair $(s', a')$ with $\metricproduct((s, a), (s', a')) < \varepsilon$ and $\totalvariation(\mdptransitions(s, a), \mdptransitions(s', a')) < \varepsilon$ a.s.\footnote{Technically, it is sufficient to satisfy this property on any subset of $\States^{\reach}$ which only differs from it up to measure 0.
		More precisely, we only require that this assumption holds for $\States^{\reach} = \support(\ProbabilityMDPsup<\MDP, s>)$, i.e.\ the set of all reachable paths with non-zero measure.
		We omit this rather technical notion and the discussion it entails in order to avoid distracting from the central results of this work.
		}
\end{description}
Essentially, this means that $\getpair$ provides a way to \enquote{exhaustively} generate all behaviours of the system up to a precision of $\varepsilon$.
This fairness assumption is easily satisfied under usual conditions.
For example, if $\States \times \stateactions$ is a bounded subset of $\Reals^d$, we can randomly sample points in that space or consider increasingly dense grids.
%The latter essentially amounts to discretization of the state-action space with increasing precision.
Alternatively, if we can sample from the set of actions and from the distributions of $\mdptransitions$, $\getpair$ can be implemented by sampling paths of random length, following random actions.
Note that we can view the procedure as a \enquote{template}:
Instead of requiring a concrete method to acquire pairs to update, we leave this open for generality; we discuss implications of this in \cref{sec:discussion-related,sec:discussion-practical}.

The requirement on total variation may seem unnecessary, especially given that we will also assume continuity.
However, otherwise we could, for example, miss out on solitary actions which are the \enquote{witnesses} for a state's value:
suppose that $\stateactions(s) = [0, 1]$ and $\mdptransitions(s, 0)$ moves to the goal, while $\mdptransitions(s, a)$ just loops back to $s$.
Only selecting actions close to $a = 0$ w.r.t.\ the product metric is not sufficient to observe that we can move to the goal.
Note that this would not be necessary if we assumed continuity of the transition function -- selecting \enquote{nearby} actions then also yields \enquote{similar} behaviour.

% As such, this serves as a hybrid between computability and continuity, as we need some way to compute appropriate state-action pairs to capture all behaviours.

\subsubsection{C: Lipschitz Continuity}\label{sec:vi-assms-lipschitz}
Finally, we present our already advertised continuity assumption.
For simplicity, we give it in its strict form and discuss relaxations later in \cref{sec:discussion-theory}.
Intuitively, Lipschitz continuity allows us to extrapolate the behaviour of the system from a single state to its surroundings.
%
%Now, recall that our goal for this algorithm is to bound the maximal reachability probability from below, i.e.\ given an MDP $\MDP = (\States, \Actions, \stateactions, \mdptransitions)$, an initial state $\initialstate \in \States$, a (measurable) set of target states $\targetset \subseteq \States$, and a value $\xi \in [0, 1]$ as input, we want to decide whether the value of the initial state is at least $\xi$, i.e.\ $\val(\initialstate) > \xi$.
%At first, we have our essential continuity assumption, namely Lipschitz continuity of the value function.
%
\begin{description}
	\item[C: Value Lipschitz Continuity]
		The value functions $\val(s)$ and $\val(s, a)$ are Lipschitz continuous with \emph{known} constants $\lipschitzstates$ and $\lipschitzproduct$, i.e.\ for all $s, s' \in \States$ and $a \in \stateactions(s), a' \in \stateactions(s')$ we have
		\begin{align*}
			\abs{\val(s) - \val(s')} &\leq \lipschitzstates \cdot \metricstates(s, s') &
			\abs{\val(s, a) - \val(s', a')} &\leq \lipschitzproduct \cdot \metricproduct((s, a), (s', a'))
		\end{align*}
\end{description}
This requirement may seem quite restrictive at first glance.
Indeed, it is the only one in this section to not usually hold on \enquote{standard} systems.
However, in order to obtain any kind of (provably correct) bounds, some notion of continuity is elementary, since otherwise we cannot safely extrapolate from finitely many observations to an uncountable set.
The immediately arising questions are (i)~why \textit{Lipschitz} continuity is necessary compared to, e.g., regular or uniform continuity, and (ii)~why \textit{knowledge of the Lipschitz constant} is required.
For the first point, note that we want to be able to extrapolate from values assigned to a single state to its immediate surroundings.
While continuity means that the values in the surroundings do not \enquote{jump}, it does not give us any way of bounding the rate of change, and this rate may grow arbitrarily (for example, consider the continuous but not Lipschitz function $\sin(\frac{1}{x})$ for $x > 0$).
So, also relating to the second point, without knowledge of the Lipschitz constant, regular continuity and Lipschitz continuity are (mostly) equivalent from a computational perspective:
The function does not have discontinuities, but we cannot safely estimate the rate of change in general.
To illustrate this point further, we give an intuitive example.
\begin{example}\label{example:lipschitz}
	We construct an MDP with a periodic, Lipschitz continuous value function, as illustrated in \cref{fig:example_lipschitz knowledge} and formally defined below.
	Intuitively, for a given period width $w$ (e.g.\ 0.25) and a periodic function $f$ (e.g.\ a triangle function), a state $s$ between $0$ and $w$ moves to a target or sink with probability $f(s)$.
	All larger states $s \geq w$ transition to $s - w$ with probability 1.
	The value function thus is periodic and Lipschitz continuous, see \cref{fig:example_lipschitz knowledge} for a possible value function and \arxivref{\cref{app:insights:lipschitz}}{\cite[\appendixref{}B.2.3]{techreport}} for a formal definition.

	\begin{figure}[t]
		\centering
		\begin{tikzpicture}
			\begin{axis}[xmin=0,xmax=1,ymin=0,ymax=1,samples=50,width=10cm,height=2.5cm,
				axis x line=middle,
				axis line style={-},
				enlarge x limits=0,
				enlarge y limits=0,
				xlabel=$\States$,ylabel=$\val(s)$,
				xtick={0,0.25,0.5,0.75,1},xticklabels={0,0.25,0.5,0.75,1},
				x label style={anchor=north},
				ytick={0,0.5,1},
				every tick/.style={-},
				]
				\addplot[-,no marks,samples=50] plot coordinates {
					(0,0)
					(0.125,1)
					(0.25,0)
					(0.375,1)
					(0.5,0)
					(0.625,1)
					(0.75,0)
					(0.875,1)
					(1,0)
				};
				\node[inner sep=0pt, outer sep=2pt, draw, black, circle] (s0) at (axis cs:0.8125,0.5) {};
				\node[inner sep=0pt, outer sep=2pt, draw, black, circle] (s1) at (axis cs:0.5625,0.5) {};
				\node[inner sep=0pt, outer sep=2pt, draw, black, circle] (s2) at (axis cs:0.3125,0.5) {};
				
				%			\node[inner sep=0pt, outer sep=2pt, draw, black, circle] (s0) at (axis cs:0.65,0.4) {};
				%			\node[inner sep=0pt, outer sep=2pt, draw, black, circle] (s1) at (axis cs:0.45,0.4) {};
				%			\node[inner sep=0pt, outer sep=2pt, draw, black, circle] (s2) at (axis cs:0.25,0.4) {};
			\end{axis}
			\path[->]
			(s0) edge[bend right] (s1)
			(s1) edge[bend right] (s2)
			;
		\end{tikzpicture}
		\caption{
			The value function of \cref{example:lipschitz}, showing that knowledge of the constant is important.
		} \label{fig:example_lipschitz knowledge}
	\end{figure}
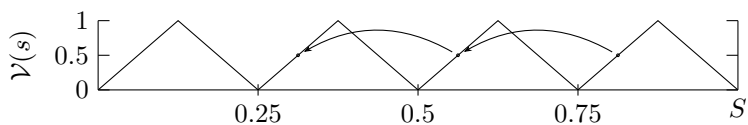

	For a finite number of samples, we can choose $f$ and $w$ such that all samples achieve a value of $1$.
	Nevertheless, we cannot conclude anything about states we have not sampled yet: \emph{Without knowledge of the constant, we cannot extrapolate from samples}.
	
	We note the underlying connection to the Nyquist-Shannon sampling theorem~\cite[Thm.~1]{shannon1949communication}.
	Intuitively, the theorem states that, for a function that contains no frequencies higher than $W$, it is completely determined by giving its ordinates at a series of points spaced $0.5 \cdot W$ apart.
	If we know the Lipschitz constant, this gives us a way of bounding the \enquote{frequency} of the value function, and thus allows us to determine it by sampling a finite number of points.
	On the other hand, without the Lipschitz constant, we do not know the frequency and cannot judge whether we are \enquote{undersampling}.
\end{example}
Since we do not assume any particular representation of the transition system, we cannot derive such constants in general.
Instead, these would need to be obtained by, e.g., domain knowledge, or tailored algorithms.
As in previous approaches~\cite{DBLP:conf/uai/GuestrinHK04,DBLP:conf/icml/MeloMR08,DBLP:journals/ejcon/AbateKLP10,DBLP:conf/qest/SoudjaniA11,DBLP:conf/hybrid/TkachevMKA13}, we thus resort to assuming that we are given this constant, offloading this (highly non-trivial) step.
Recall that Lipschitz continuity of the transition function implies Lipschitz continuity of the value function (see \arxivref{\cref{app:assumptions:lipschitz:implication}}{\cite[\appendixref{}B.2.1]{techreport}}), but can potentially be checked more easily.

%\begin{remark}
%	The sole purpose of Lipschitz continuity is to be able to extrapolate the behaviour of the system from one state to its surroundings.
%	Our algorithms can likely be adapted to other ways of arriving at such an extrapolation.
%	For example, if we have an oracle that given a state informs us about an upper bound of the local rate of change (i.e.\ local Lipschitz continuity), we can modify our algorithms appropriately (see \cref{sec:discussion-theory}).
%\end{remark}

\subsection{Assumptions Applied: Value Iteration Algorithm} \label{sec:cvi:algorithm}

Before we present our new algorithm, we explain how our assumptions allow us to lift VI to the uncountable domain.
Contrary to the finite state setting, we are unable to store precise values for each state explicitly, since there are uncountably many states.
Hence, the algorithm exploits the Lipschitz-continuity of the value function as follows.
Assume that we know that the value of a state $s$ is bounded from below by a value $l$, i.e.\ $\val(s) \geq l$.
Then, by Lipschitz-continuity of $\val$, we know that the value of a state $s'$ is bounded by $l - \metricstates(s, s') \cdot \lipschitzstates$.
More generally, if we are given a finite set of states $\sampled$ with correct lower bounds $\lowerboundstored \colon \sampled \to [0, 1]$, we can safely extend these values to the whole state space by
\begin{equation*}
	\lowerbound(s) \coloneqq {\max}_{s' \in \sampled} \left( \lowerboundstored(s') - \lipschitzstates \cdot \metricstates(s, s')\right).
\end{equation*}
Since $\val(s) \geq \lowerboundstored(s)$ for all $s \in \sampled$, we have $\val(s) \geq \lowerbound(s)$ for all $s \in \States$, i.e.\ $\lowerbound(\cdot)$ is a valid lower bound.
We thus obtain a lower bound for all of the uncountably many states, described symbolically as a combination of finitely many samples.
See \cref{fig:extrapolation_example} for an illustration.

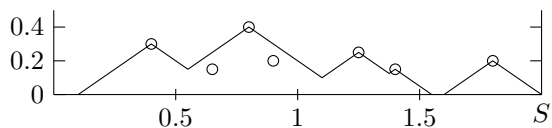
\begin{figure}[t]
	\centering
	\begin{tikzpicture}
		\begin{axis}[xmin=0,xmax=2,ymin=0,ymax=0.5,samples=50,width=8cm,height=2.7cm,
				axis x line=middle,
				axis line style={-},
				enlarge x limits=0,
				enlarge y limits=0,
				xlabel=$\States$,ylabel=\empty,
				xtick={0.5,1,1.5},xticklabels={0.5,1,1.5},
				x label style={anchor=north},
				ytick={0,0.2,0.4},
				every tick/.style={-},
			]
			\addplot[-,no marks,samples=50] plot coordinates {
				(0.1,0)
				(0.4,0.3)
				(0.55,0.15)
				(0.8,0.4)
				(1.1,0.1)
				(1.25,0.25)
				(1.375,0.125)
				(1.4,0.15)
				(1.55,0)
				(1.6,0)
				(1.8,0.2)
				(2,0)
			};
			\addplot[only marks,mark=o] plot coordinates {
				(0.4,0.3)
				(0.8,0.4)
				(1.25,0.25)
				(1.4,0.15)
				(1.8,0.2)

				(0.65,0.15)
				(0.9,0.2)
			};
		\end{axis}
	\end{tikzpicture}
	\caption{Example of the function extension on the set $[0, 2]$ with a Lipschitz constant of $\lipschitzstates = 1$.
	Dots represent stored values in $\lowerboundstored$, while the solid line represents the extrapolated function $\lowerbound$.
	Note that it is possible to have $\lowerboundstored(s) < \lowerbound(s)$, as seen in the graph.
	} \label{fig:extrapolation_example}
\end{figure}

This is sufficient to deal with Markov chains, but for MDPs we additionally need to take care of the (potentially uncountably many) actions.
Recall that value iteration updates state values with the maximum over available actions, $v_{n+1}(s) = \max_{a \in \stateactions(s)} \ExpectedSumMDP{\mdptransitions}{s}{a}{v_n}$.
This is straightforward to compute when there are only finitely many actions, but in the uncountable case obtaining $\lowerbound(s) = \sup_{a \in \stateactions(s)} \lowerbound(s, a)$ is much more involved.
We apply the idea of Lipschitz continuity again, storing values for a set $\sampled$ of state-action pairs instead of only states.
We bound the value of every state-action pair by
\begin{equation} \label{eq:lowerbound_state_action}
	\lowerbound(s, a) \coloneqq {\max}_{(s', a') \in \sampled} \left( \lowerboundstored(s', a') - \metricproduct((s, a), (s', a')) \cdot \lipschitzproduct \right)
\end{equation}
%Then we can derive the state value estimate by approximating the maximum over all available actions, i.e.\ $\lowerbound(s) = \sup_{a \in \stateactions(s)} \lowerbound(s, a)$.
%Analogously, we can bound the overall values based on a finite set of state-action pairs $X$ and associated values by
%%
%
%Note that the lower bound of $(s, a)$ is influenced by values assigned to other state-action pairs in its vicinity.
Observe that $\lowerbound(s, a)$ is computable and Lipschitz-continuous as well, so by \textbf{Maximum Approximation} we can approximate the bound of any state, i.e.\ $\lowerbound(s) = \max_{a \in \stateactions(s)} \lowerbound(s, a)$, based on such a finite set of values assigned to state-action pairs.
(Recall that $\stateactions(s)$ is compact and $\lowerbound(s, a)$ continuous, hence the maximum is attained.)
Consequently, we can also under-approximate $\ExpectedSumMDP{\mdptransitions}{s}{a}{\lowerbound}$ by \textbf{Transition Approximation}.
To avoid clutter, we omit the following two special cases in the definition of $\lowerbound(s, a)$:
Firstly, if $\sampled = \emptyset$, we naturally set $\lowerbound(s, a) = 0$.
Secondly, if all pairs $(s', a')$ are too far away for a sensible estimate, i.e.\ if \cref{eq:lowerbound_state_action} was yielding $\lowerbound(s, a) < 0$, we also set $\lowerbound(s, a)$ to $0$.

\begin{algorithm*}[t]
	\caption{The Value Iteration (VI) Algorithm for MDPs with general state- and action-spaces.} \label{alg:cvi}
	\begin{algorithmic}[1]
		\Require \textsf{ApproxLower} query with threshold $\xi$, satisfying \textbf{A1--A4}, \textbf{B.VI} and \textbf{C}.
		\Ensure \texttt{yes}, if $\val(\initialstate) > \xi$.
		\State $\sampled \gets \emptyset, \algostep \gets 1$ \Comment{Initialize}
		%$X_0 \gets \emptyset, \algostep \gets 1, s_1 \gets \initialstate, \lowerboundstored_1(\cdot, \cdot) \gets 0$ \Comment{Initialize}
		\While{$\underapprox(\lowerbound(\initialstate), \getprecision) \leq \xi$} %\Comment{With $\lowerbound$ as defined in \cref{eq:lowerbound_state}}
			\State $(s, a) \gets \getpair$ \Comment{Sample state-action pair}
			\If{$s \in \targetset$} \quad $\lowerboundstored(s, \cdot) \gets 1$ \label{line:alg:cvi:target} \Comment{Handle target states}
			\Else \quad $\lowerboundstored(s, a) \gets \underapprox(\ExpectedSumMDP{\mdptransitions}{s}{a}{\lowerbound}, \getprecision)$ \label{line:alg:cvi:update} \Comment{Update $\lowerboundstored$}
			\EndIf
			\State $\sampled \gets \sampled \union \{(s, a)\}$, $\algostep \gets \algostep + 1$
		\EndWhile
		\State \Return \texttt{yes}
	\end{algorithmic}
\end{algorithm*}
%\todo{couldn't the imprecisions (from Approx) be accummulated?}

%With these ideas in mind, 
%\todo{more explanation?}
%R1: I was not able to _fully_ comprehend Algorithm 1. I think I got the key idea, but could you please provide explanations on all variables occurring in the algorithm? And on every step/update. And on the loop guard. I think this is really missing in the paper and is actually the papers biggest detriment regarding accessibility.
We present VI for MDPs with general state- and action-spaces in \cref{alg:cvi}.
It depends on $\getprecision$, a sequence of precisions converging to zero in the limit, e.g.\ $\getprecision = \frac{1}{\algostep}$.
%After initializing the set of sampled state-action pairs $\sampled$ and our iteration-counter $\algostep$, i
The algorithm executes the main loop until the current approximation of the lower bound of the initial state $\lowerbound(\initialstate) = \max_{a \in \stateactions(\initialstate)} \lowerbound(\initialstate, a)$ exceeds the given threshold $\xi$.
Inside the loop, the algorithm updates state-action pairs yielded by $\getpair$.
For target states, the lower bound is set to $1$.
Otherwise, we set the bound of the selected pair to an approximation of the expected value of $\lowerbound$ under the corresponding transition.
Here is the crucial difference to VI in the finite setting: Instead of using \cref{eq:vi}, we have to use \cref{eq:lowerbound_state_action} and $\underapprox$, the approximations that exist by assumption, see \cref{sec:vi-assms-basic}.
Since $\getprecision$ converges to zero, the approximations eventually get arbitrarily fine.
The procedure $\getprecision$ may be adapted heuristically in order to speed up computation.
For example, it may be beneficial to only approximate up to $0.01$ precision at first to quickly get a rough overview.
%The benefits of this approach become more apparent in the following algorithm, where approximated upper bounds can be used to guide the exploration and quickly identify \enquote{bad} actions.
%
%\begin{remark}
%	In the algorithm, each sampled state-action pair corresponds to one atomic step.
%	Thus, we index all variables used in the algorithm by the step number $\algostep$.
%	For example, $s_\algostep$ refers to the state sampled in the $\algostep$-th step.
%	To ease notation, we assume that any variable which is not changed in step $\algostep$ retains its value.
%	For example, any lower bound $\lowerboundstored_\algostep(s, a)$ of all state-action action pairs $(s, a)$ which have not been updated in step $\algostep$ still have the same value in step $\algostep + 1$.
%	We keep this convention throughout the paper.
%\end{remark}
%
We show that \cref{alg:cvi} is correct, i.e.\ the stored values (i)~are lower bounds and (ii)~converge to the true values in \arxivref{\cref{app:proofs:lvi}}{\cite[\appendixref{}E.1]{techreport}}.
Here, we only provide a sketch, illustrating the main steps.
%The sketch particularly omits several subtle technicalities.
%For example, the precision of values obtained by $\underapprox$ can be non-monotonic and approximation errors might accumulate.
%These issues are treated in the main proof.
%
\begin{theorem} \label{stm:cvi_correct}
	\cref{alg:cvi} is correct under Assumptions \textbf{A1--A4}, \textbf{B.VI}, and \textbf{C}, i.e.\ it outputs \texttt{yes} iff $\val(s) > \xi$.
\end{theorem}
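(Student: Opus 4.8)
The plan is to prove the two directions of the \enquote{iff} separately.

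\textbf{Soundness} (output \texttt{yes} $\Rightarrow \val(\initialstate) > \xi$). I would show by induction on the loop counter $\algostep$ that after every iteration the extrapolated function is a valid lower bound, i.e.\ $\lowerbound(s,a) \le \val(s,a)$ for all $(s,a) \in \States \times \stateactions$. The base case $\sampled = \emptyset$ is trivial. For the step, assume $\lowerboundstored(s',a') \le \val(s',a')$ for all currently stored pairs. Then for any $(s,a)$ and any stored $(s',a')$, \textbf{Assumption C} gives $\lowerboundstored(s',a') - \lipschitzproduct\cdot\metricproduct((s,a),(s',a')) \le \val(s',a') - \lipschitzproduct\cdot\metricproduct((s,a),(s',a')) \le \val(s,a)$, hence $\lowerbound(s,a) \le \val(s,a)$ (the two clamping cases only decrease $\lowerbound$), and so $\lowerbound(s) = \max_a \lowerbound(s,a) \le \val(s)$. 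Now look at the pair $(s,a)$ updated in this iteration: if $s \in \targetset$ then $\val(s,a) = \val(s) = 1$ and setting $\lowerboundstored(s,\cdot) \gets 1$ is sound; otherwise $\lowerboundstored(s,a) \gets \underapprox(\ExpectedSumMDP{\mdptransitions}{s}{a}{\lowerbound}, \getprecision) \le \ExpectedSumMDP{\mdptransitions}{s}{a}{\lowerbound} \le \ExpectedSumMDP{\mdptransitions}{s}{a}{\val} = \val(s,a)$ by monotonicity of the integral and $\lowerbound \le \val$ (the pre-update extrapolation, valid by the induction hypothesis). As all other stored values are unchanged, the invariant persists. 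In particular $\underapprox(\lowerbound(\initialstate), \getprecision) \le \lowerbound(\initialstate) \le \val(\initialstate)$ throughout, so if the loop exits then $\val(\initialstate) \ge \underapprox(\lowerbound(\initialstate), \getprecision) > \xi$.

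\textbf{Completeness} ($\val(\initialstate) > \xi \Rightarrow$ termination). Writing $\lowerbound^{(n)}$ for the extrapolation after $n$ iterations, I would first observe that each $\lowerbound^{(n)}$ is $\lipschitzproduct$-Lipschitz (a maximum of $\lipschitzproduct$-Lipschitz functions) and that we may assume stored values are only ever increased (taking the maximum with the old value preserves soundness), so $\lowerbound^{(n)} \uparrow \lowerbound^{(\infty)}$ pointwise for some $\lipschitzproduct$-Lipschitz $\lowerbound^{(\infty)} \le \val$. The key step is that $\lowerbound^{(\infty)}$ is a \emph{pre-fixed point} of the reachability Bellman operator on the reachable sub-MDP: $\lowerbound^{(\infty)}(t) = 1$ for every reachable $t \in \targetset$, and $\lowerbound^{(\infty)}(s) \ge \sup_{a\in\stateactions(s)} \ExpectedSumMDP{\mdptransitions}{s}{a}{\lowerbound^{(\infty)}}$ for every reachable $s \notin \targetset$. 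Granting this, since $\val$ is the \emph{smallest} fixed point of \cref{eq:value_fixpoint} and supports of transitions out of reachable states are again reachable (so the equation restricts consistently), the least-fixed-point/Knaster--Tarski argument gives $\lowerbound^{(\infty)} \ge \val$ on reachable states, hence $\lowerbound^{(\infty)}(\initialstate) = \val(\initialstate) > \xi$; as $\getprecision \to 0$ we have $\underapprox(\lowerbound(\initialstate), \getprecision) \ge \lowerbound(\initialstate) - \getprecision$ eventually exceeding $\xi$, so the loop exits with \texttt{yes}.

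Establishing the pre-fixed-point property is the part I expect to be the main obstacle. For a reachable $s \notin \targetset$, $a \in \stateactions(s)$ and $\eta > 0$: by the monotone convergence theorem pick $n$ with $\ExpectedSumMDP{\mdptransitions}{s}{a}{\lowerbound^{(n)}} \ge \ExpectedSumMDP{\mdptransitions}{s}{a}{\lowerbound^{(\infty)}} - \eta$; by \textbf{Assumption B.VI} some later iteration $m \ge n$ picks a pair $(s',a')$ with $\metricproduct((s,a),(s',a')) < \varepsilon$ and $\totalvariation(\mdptransitions(s,a),\mdptransitions(s',a')) < \varepsilon$ for a prescribed $\varepsilon$; the corresponding update makes $\lowerboundstored(s',a')$ at least $\ExpectedSumMDP{\mdptransitions}{s'}{a'}{\lowerbound^{(m)}} - \getprecision \ge \ExpectedSumMDP{\mdptransitions}{s}{a}{\lowerbound^{(\infty)}} - \eta - \varepsilon - \getprecision$, using $\lowerbound^{(m)} \ge \lowerbound^{(n)}$ together with the total-variation bound for the $[0,1]$-valued integrand; finally Lipschitzness gives $\lowerbound^{(\infty)}(s,a) \ge \lowerbound^{(\infty)}(s',a') - \lipschitzproduct\varepsilon \ge \lowerboundstored(s',a') - \lipschitzproduct\varepsilon$. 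Sending $\varepsilon \to 0$ and $\eta \to 0$ and taking the supremum over $a$ (which commutes with the monotone limit) yields the inequality for non-target states. The target states are handled analogously, propagating the explicit value $1$ assigned to sampled targets via fairness and Lipschitz extrapolation; here the total-variation requirement in \textbf{B.VI} is what forces sampled kernels to preserve point masses sitting exactly on reachable target states. Throughout, the measure-theoretic subtleties around \enquote{reachable state}, \enquote{support}, and null sets (cf.\ the footnote to \textbf{B.VI} and the remark on supports) require careful handling; these, rather than the overall structure, are the delicate part.
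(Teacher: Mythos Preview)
Your proof follows essentially the same route as the paper: soundness of the bounds by induction on $\algostep$, then showing that the monotone limit $\lowerbound^{(\infty)}$ satisfies the Bellman fixed-point relation on the reachable states by extracting, via \textbf{B.VI}, a sequence of sampled pairs converging to any reachable $(s,a)$ in both the product metric and total variation, and finally concluding from the least-fixed-point characterisation of $\val$. The only deviation is that you argue for the pre-fixed-point inequality $\lowerbound^{(\infty)}(s) \geq \sup_{a} \ExpectedSumMDP{\mdptransitions}{s}{a}{\lowerbound^{(\infty)}}$ (which is all Knaster--Tarski needs) using monotone convergence, whereas the paper establishes the full equality $\lowerbound_\infty(s,a) = \ExpectedSumMDP{\mdptransitions}{s}{a}{\lowerbound_\infty}$ using uniform convergence of the equi-Lipschitz $\lowerbound_\algostep$ on the compact domain; this is a harmless and slightly more economical variant.
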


\begin{proof}[Proof sketch]
	First, we show that $\lowerbound_{\algostep}(s) \leq \lowerbound_{\algostep + 1}(s) \leq \val(s)$ by simple induction on the step.
	Initially, we have $\lowerbound_1(s) = 0$, obviously satisfying the condition.
	The updates in Lines~\ref{line:alg:cvi:target} and \ref{line:alg:cvi:update} both keep correctness, i.e.\ $\lowerbound_{\algostep + 1}(s) \leq \val(s)$, proving the claim.
	
	Since $\lowerbound_\algostep$ is monotone as argued above, its limit for $\algostep \to \infty$ is well defined, denoted by $\lowerbound_\infty$.
	By \textbf{State-Action Sampling}, the set of accumulation points of $s_\algostep$ contains all reachable states $\States^{\reach}$.
	We then prove that $\lowerbound_\infty$ satisfies the fixed point equation \cref{eq:value_fixpoint}.
	For this, we use the second part of the assumption on $\getpair$, namely that for every $(s, a) \in \States^{\reach} \times \stateactions$ we get a converging subsequence $(s_{\algostep_k}, a_{\algostep_k})$ where additionally $\mdptransitions(s_{\algostep_k}, a_{\algostep_k})$ converges to $\mdptransitions(s, a)$ in total variation.
	Intuitively, since infinitely many updates occur infinitely close to $(s, a)$, its limit lower bound $\lowerbound_\infty(s, a)$ agrees with the limit of the updates values $\lim_{k \to \infty} \ExpectedSumMDP{\mdptransitions}{s_{\algostep_k}}{a_{\algostep_k}}{\lowerbound_{\algostep_k}}$.
	Since $\lowerbound_\infty$ satisfies the fixed point equation and is less or equal to the value function $\val$, we get the result, since $\val$ is the smallest fixed point.
\end{proof}
\section{Converging Upper Bounds} \label{sec:cbrtdp}

In this section, we present the second set of assumptions, allowing us to additionally compute converging \emph{upper} bounds.
With both lower and upper bounds, we can quantify the progress of the algorithm and, in particular, terminate the computation once the bounds are sufficiently close.
Therefore, instead of only providing a semi-decision procedure for reachability, this algorithm is able to determine the maximal reachability probability up to a given precision.
Thus, we obtain the first algorithm able to handle such general systems with guarantees on its result.
We again present our assumptions together with a discussion of their necessity (\cref{sec:brtdp-assms}), and then introduce the subsequent algorithm and prove its correctness (\cref{sec:brtdp-algo}).
As expected, obtaining this additional information also requires additional assumptions.
On the other hand, quite surprisingly, we can use the additional information of upper bounds to actually \emph{speed up} the computation, as discussed in \cref{sec:discussion-practical}.

As before, our approach is inspired by algorithms for finite MDP, in this case by \emph{Bounded Real-Time Dynamic Programming} (BRTDP) \cite{DBLP:conf/icml/McMahanLG05,DBLP:conf/atva/BrazdilCCFKKPU14}.
BRTDP uses the same update equations as VI, but iterates both lower and upper bounds.
A major contribution of~\cite{DBLP:conf/atva/BrazdilCCFKKPU14} was to solve the long standing open problem of how to deal with \emph{end components}.
These parts of the state space prevent convergence of the upper bounds by introducing additional fixpoints of \cref{eq:value_fixpoint}.
%A major contribution of BRTDP\todo{not really, in the old one not at all} is the treatment of \emph{end components}, which is necessary to ensure convergence of the upper bounds.\todo{why to mention this at all}
%These end components provide a major obstacle in the uncountable setting, as we explain later.
We direct the interested reader to \arxivref{\cref{app:insights:finite_brtdp}}{\cite[\appendixref{}A.2]{techreport}} for further details on BRTDP and insights on the issue of end components.
In the uncountable setting,  these issues arise as well alongside several other, related problems, which we discuss in \cref{sec:brtdp-assm-absorp}.

%Recall that, as before, we have an MDP $\MDP$, initial state $\initialstate \in \States$, and a (measurable) set of target states $\targetset \subseteq \States$.
%Instead of the threshold value $\xi$ we are now given a precision $\varepsilon > 0$ as input and our goal is to approximate the value of the initial state $\val(\initialstate)$ up to a precision of $\varepsilon$, i.e.\ determine a pair of values $(l, u)$ with $\val(\initialstate) \in [l, u]$ and $0 \leq u - l \leq \varepsilon$.

\subsection{Assumptions}\label{sec:brtdp-assms}

The basic assumptions \textbf{A1}--\textbf{A4} as well as \textbf{Lipschitz continuity} (Assumption \textbf{C}) remain unchanged.
For \textbf{Maximum Approximation} (\textbf{A2}) and \textbf{Transition Approximation} (\textbf{A3}), we additionally require that we are able to \emph{over}-approximate the respective results.
The respective assumptions are denoted by \textbf{A5} and \textbf{A6}, respectively, and both over-approximations by $\overapprox$.
%This is captured by Assumptions \textbf{A5} and \textbf{A6}, which are analogues of Assumptions \textbf{A2} and \textbf{A3}.
%We denote both over-approximations by $\overapprox$.
Further, we only require a weakened variant of \textbf{State-Action Sampling}, now called Assumption \textbf{B.BRTDP} instead of Assumption \textbf{B.VI}.
Finally, there is the new Assumption \textbf{D} called \textbf{Absorption}, addressing the aforementioned issue of end components.

\subsubsection{B: Weaker Sampling (Asm.\ B.BRTDP)}

We again assume a $\getpair$ oracle, but, perhaps surprisingly, with \emph{weaker} assumptions.
Instead of requiring it to return \enquote{all} actions, we only require it to yield \enquote{optimal} actions, respective to a given state-action value function.
We first introduce some notation.
Intuitively, we want $\getpair$ to yield actions which are optimal with respect to the \emph{upper bounds} computed by the algorithm.
However, these upper bounds potentially change after each update.
Thus, assume that $f_n \colon \States \times \stateactions \to [0, 1]$ is an arbitrary sequence of computable, Lipschitz continuous, (point-wise) monotone decreasing functions, assigning a value to each state-action pair, and set $\mathcal{F} = (f_1, f_2, \dots)$.
For each state $s \in \States$, set
\begin{equation*}
	\stateactions_\mathcal{F}(s) := \{a \in \stateactions(s) \mid \forall \varepsilon > 0.\ \forall N \in \Naturals.\ \exists n > N.\\ {\max}_{a' \in \stateactions(s)} f_n(s, a') - f_n(s, a) < \varepsilon\},
\end{equation*}
i.e.\ actions that infinitely often achieve values arbitrarily close to the optimum of $f_n$.
%Now, let $\Strategies_+$ be all strategies which only select actions from $\stateactions_+(s)$, i.e.\ $\support(\strategy(\finitepath)) \subseteq \stateactions_+(\last{\finitepath})$ for all paths $\finitepath \in \Finitepaths<\MDP>$.
Let $\States^{\reach}_\mathcal{F} = \{\last{\finitepath} \mid \finitepath \in \Finitepaths<\MDP, \initialstate> \intersection (\States \times \stateactions_\mathcal{F})^* \times \States\}$ be the set of all states reachable using these optimal actions.\footnote{As in \cref{sec:cvi}, we simplify the definition of $\States^{\reach}_\mathcal{F}$ slightly in order to avoid technical details.}
%, i.e.\ a set such that
%\begin{equation*}
%	{\sup}_{\strategy \in \Strategies_+} \ProbabilityMDP<\MDP, \initialstate><\strategy>[\reach \setcomplement{\States^{\reach}_+}] = 0.
%\end{equation*}
Essentially, we require that $\getpair$ samples densely in $\States^{\reach}_\mathcal{F} \times \stateactions_\mathcal{F}$.
\begin{description}
	\item[B.BRTDP: State-Action Sampling]
		For any $\varepsilon > 0$, $\mathcal{F}$ as above, $s \in \States^{\reach}_\mathcal{F}$, and $a \in \stateactions_\mathcal{F}(s)$ we have that $\getpair$ a.s.\ eventually yields a pair $(s', a')$ with $\metricproduct((s, a), (s', a')) < \varepsilon$ and $\totalvariation(\mdptransitions(s, a), \mdptransitions(s', a')) < \varepsilon$.
\end{description}
While this new variant may seem much more involved, it is \emph{weaker} than its previous variant, since $\stateactions_\mathcal{F}(s) \subseteq \stateactions(s)$ for each $s \in \States$ and thus also $\States^{\reach}_\mathcal{F} \subseteq \States^{\reach}$.
As such, it also allows for more practical optimizations, which we briefly discuss in \cref{sec:discussion-practical}.

%In particular, randomly selecting actions still satisfies the assumption.
%We explain several other possible instantiations in \cref{sec:extensions:learning}.

%Interestingly, computing the upper bounds allows us to discard actions even in \enquote{important} states.
%For example, suppose that action $a \in \stateactions(s)$ of state $s$ has an upper bound of $0.5$ while an other action $a' \in \stateactions(s)$ has a lower bound of $0.7$.
%Clearly, $a$ can never be optimal and there is no need to invest further computational effort.
%As such, biasing towards promising actions may significantly increase performance of the algorithm, as observed in \cite{DBLP:conf/atva/BrazdilCCFKKPU14}.

\subsubsection{D: Absorption} \label{sec:brtdp-assm-absorp}

We present our most specific assumption.
While it is \emph{not needed} for correctness, we require it for convergence of the upper bounds to the value and thus for termination of the algorithm.
\begin{description}
	\item[D: Absorption]
	There exists a \emph{known and decidable} set $\sinkset$ (called \emph{sink}) such that $\val(s) = 0$ for all $s \in \sinkset$.
	Moreover, for any $s \in \States$ and strategy $\strategy$ we have $\ProbabilityMDP<\MDP, s><\strategy>[\reach (\targetset \union \sinkset)] = 1$.
\end{description}
Intuitively, the assumption requires that for all strategies, the system will eventually reach a target or a goal state; in other words:
It is not possible to avoid both target and sink infinitely long.
Variants of this assumption are used in numerous settings:
On MDP, it is similar to the contraction assumption, e.g.\ \cite[Chp.~4]{bertsekas1978stochastic}; in stochastic game theory (a two-player extension of MDP) it is called \emph{stopping}, e.g.\ \cite{DBLP:journals/iandc/Condon92}; and, using terms from the theory of the stochastic shortest path problem, we require all strategies to be \emph{proper}, see e.g.\ \cite{DBLP:journals/mor/BertsekasT91}.

This assumption already is important in the finite setting:
There, \textbf{Absorption} is equivalent to the absence of \emph{end components}, which introduce multiple solutions of \cref{eq:value_fixpoint}.
Then, a VI algorithm computing upper bounds can be \enquote{stuck} at a greater fixpoint than the value and thus does not converge~\cite{DBLP:conf/atva/BrazdilCCFKKPU14,DBLP:journals/tcs/HaddadM18}.
\emph{Any} procedure using value iteration thus either needs to exclude such cases or detect and treat them.
Aside from end components, which are the only issue in the finite setting, %\cite{DBLP:conf/atva/BrazdilCCFKKPU14,DBLP:journals/tcs/HaddadM18}, in which one can cycle forever.
%In addition to that,
uncountable systems may feature other complex behaviour, such as Zeno-like approaching the target closer and closer without reaching it.
%As another example, consider a system where a variable seems to attain larger and larger values without any way to detect whether it will ever zero again (which generalizes the halting problem).

Unfortunately, even just detecting these problems already is difficult.
For the mentioned, restricted setting of probabilistic programs, almost sure termination is $\Pi_0^2$-complete \cite{DBLP:conf/mfcs/KaminskiK15}.
Yet, universal termination with goal set $\targetset \union \sinkset$ is exactly what we require for \textbf{Absorption}.
So, already on a restricted setting (together with a \emph{given} guess for $\sinkset$), we cannot decide whether the assumption holds, let alone treat the underlying problems.
Thus, we decide to exclude this issue and delegate treatment to specialized approaches.

In summary, while this assumption is indeed restrictive, it is the key point that allows us to obtain convergent upper bounds and thus an anytime algorithm.
As argued above, an assumption of this kind seems to be \emph{necessary} to obtain such an algorithm in this generality.

%However, we note that, as in the finite setting, this assumption is only required to guarantee \emph{convergence}, not correctness.

%By collapsing these end components, i.e.\ replacing them with a single state that cannot self-loop any more, we ensure that a path cannot be stuck forever. Hence, a target or sink is reached almost surely and the system satisfies the assumption \textbf{Absorption}.

\begin{remark}
	These problems do not occur when considering \emph{finite horizon} or \emph{discounted} properties, which are frequently used in practice.
	For details on treating finite horizon objectives, see \arxivref{\cref{app:step-bounded}}{\cite[\appendixref{}C.1]{techreport}}.
	Discounted reachability with a factor of $\gamma < 1$ is equivalent to normal reachability where at each step the system moves into a sink state with probability $(1 - \gamma)$.
	\textbf{Absorption} is trivially satisfied and our methods are directly applicable.
	%An example with an \emph{infinite horizon} property satisfying our assumptions is given in \cref{sec:implementation}.
	% Note that discounting often only is used as a tool make computations feasible, while our algorithm solves the general problem under the given assumptions.
\end{remark}

\subsection{Assumptions Applied: The Convergent Anytime Algorithm}\label{sec:brtdp-algo}

With our assumptions in place, we are ready to present our adaptation of BRTDP to the uncountable setting.
Compared to VI, we now also store upper bounds, again using Lipschitz-continuity to extrapolate the stored values.
In particular, together with the definitions of \cref{eq:lowerbound_state_action} we additionally set
\begin{equation*}
	\upperbound(s, a) = {\min}_{(s', a') \in \sampled} \left( \upperboundstored(s', a') + \metricproduct((s, a), (s', a')) \cdot \lipschitzproduct \right).
\end{equation*}
We also set $\upperbound(s, a) = 1$ if either $\sampled = \emptyset$ or the above equation would yield $\upperbound(s, a) > 1$.

\begin{algorithm*}[t]
	\caption{The BRTDP algorithm for MDPs with general state- and action-spaces.} \label{alg:cbrtdp}
	\begin{algorithmic}[1]
		\Require \textsf{ApproxBounds} query with precision $\varepsilon$, satisfying \textbf{A1--A6}, \textbf{B.BRTDP}, \textbf{C} and \textbf{D}.
		\Ensure $\varepsilon$-optimal values $(l, u)$.
		\State $\sampled \gets \emptyset, \algostep \gets 1$ \Comment{Initialize}
		\While{$\overapprox\Big(\upperbound(\initialstate), \getprecision\Big) - \underapprox\Big(\lowerbound(\initialstate), \getprecision\Big) \geq \varepsilon$}
			\State $s, a \gets \getpair$ \Comment{Sample stat-action pair}
			\If{$s \in \targetset$} \quad $\lowerboundstored(s, \cdot) \gets 1$ \label{line:alg:cbrtdp:target} \Comment{Handle special cases}
			\ElsIf{$s \in \sinkset$} \quad $\upperboundstored(s, \cdot) \gets 0$ \label{line:alg:cbrtdp:sink}
			\Else \Comment{Update upper and lower bounds}
				\State $\upperboundstored(s, a) \gets \overapprox(\ExpectedSumMDP{\mdptransitions}{s}{a}{\upperbound}, \getprecision)$  \label{line:alg:cbrtdp:update_u}
				\State $\lowerboundstored(s, a) \gets \underapprox(\ExpectedSumMDP{\mdptransitions}{s}{a}{\lowerbound}, \getprecision)$ \label{line:alg:cbrtdp:update_l}
			\EndIf
			\State $\sampled \gets \sampled \union \{(s, a)\}$, $\algostep \gets \algostep + 1$
		\EndWhile
		\State \Return $(\lowerbound(\initialstate), \upperbound(\initialstate))$
	\end{algorithmic}
\end{algorithm*}

We present BRTDP in \cref{alg:cbrtdp}.
It is structurally similar to BRTDP in the finite setting (see \arxivref{\cref{app:insights:finite_brtdp}}{\cite[\appendixref{}A.2]{techreport}}).
The major difference is given by the storage tables $\upperboundstored$ and $\lowerboundstored$ used to compute the current bounds $\upperbound$ and $\lowerbound$, again exploiting Lipschitz continuity.
As before, the central idea is to repeatedly update state-action pairs given $\getpair$. %, potentially focussing on \enquote{important} parts of the system.
If $\getpair$ yields a state of the terminal sets $\targetset$ and $\sinkset$, we update the stored values directly.
Otherwise, we back-propagate the value of the selected pair by computing the expected value under this transition.
Moreover, we again require that $\getprecision$ %yields a sequence of approximation precisions
converges to zero.
Note that the algorithm can easily be supplied with a-priori knowledge by initializing the upper and lower bounds to non-trivial values.
%Moreover, it can be stopped and restarted.
Moreover, in contrast to VI, this algorithm is an \emph{anytime} algorithm, i.e.\ it can at any time
%quantify its progress and 
provide an approximate solution together with its precision.
%Moreover, when supplying BRTDP with a finite MDP, it directly generalizes both BRTDP (by choosing $\getpair$ as the sampling-based heuristic introduced in \cite{DBLP:conf/atva/BrazdilCCFKKPU14}) and interval iteration (by $\getpair$ repeatedly yielding all state-action pairs in fixed order).

Despite the algorithm being structurally similar to the finite variant of \cite{DBLP:conf/atva/BrazdilCCFKKPU14}, the proof of correctness unsurprisingly is more intricate due to the uncountable sets.
We again provide both a simplified proof sketch here and the full technical proof in \arxivref{\cref{app:proofs:cbrtdp}}{\cite[\appendixref{}E.2]{techreport}}.
%As in the proof of \cref{stm:cvi_correct}, the sketch omits treatment of approximation errors for simplicity.
%
\begin{theorem} \label{stm:cbrtdp_correct}
	\Cref{alg:cbrtdp} is correct under Assumptions \textbf{A1--A6}, \textbf{B.BRTDP}, \textbf{C} and \textbf{D}, and terminates with probability 1.
\end{theorem}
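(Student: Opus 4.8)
The plan is to mirror the structure of the proof sketch for \cref{stm:cvi_correct} and extend it to handle the upper bounds and termination. I would organize the argument into three parts: (i) correctness of the bounds, (ii) convergence of the bounds to $\val$, and (iii) termination with probability 1. For part (i), I would show by induction on $\algostep$ that for every state $s$ and every state-action pair $(s,a)$, we have $\lowerbound_\algostep(s,a) \leq \val(s,a) \leq \upperbound_\algostep(s,a)$, and correspondingly $\lowerbound_\algostep(s) \leq \val(s) \leq \upperbound_\algostep(s)$. The base case is immediate: $\lowerbound_1 \equiv 0$ and $\upperbound_1 \equiv 1$ since $\sampled = \emptyset$. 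For the inductive step, the updates in Lines~\ref{line:alg:cbrtdp:target}, \ref{line:alg:cbrtdp:sink} are correct by definition of $\targetset$ and by \textbf{Absorption} (for $\sinkset$), and the updates in Lines~\ref{line:alg:cbrtdp:update_u}, \ref{line:alg:cbrtdp:update_l} preserve the bounds because $\overapprox$ over-approximates $\ExpectedSumMDP{\mdptransitions}{s}{a}{\upperbound_\algostep}$, which by the induction hypothesis and monotonicity of expectation is at least $\ExpectedSumMDP{\mdptransitions}{s}{a}{\val} = \val(s,a)$; symmetrically for the lower bound. The Lipschitz extrapolation formulas then propagate correctness to all of $\States \times \stateactions$ using Assumption~\textbf{C}, exactly as in \cref{sec:cvi:algorithm}.

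For part (ii), monotonicity of $\lowerbound_\algostep$ (increasing) and $\upperbound_\algostep$ (decreasing) gives well-defined pointwise limits $\lowerbound_\infty$ and $\upperbound_\infty$, both computable-in-the-limit and Lipschitz with the same constants. As in the lower-bound proof, I would argue that $\lowerbound_\infty$ satisfies the fixed point equation \cref{eq:value_fixpoint} and hence $\lowerbound_\infty = \val$, using Assumption~\textbf{B.BRTDP}: for any $\mathcal{F}$ built from the sequence of upper-bound functions $(\upperbound_n)$, $\getpair$ samples densely (in product metric and total variation) in $\States^{\reach}_\mathcal{F} \times \stateactions_\mathcal{F}$, and by the continuity of $\lowerbound_\infty$ together with the total-variation convergence, the limit of the update values $\lim_k \ExpectedSumMDP{\mdptransitions}{s_{\algostep_k}}{a_{\algostep_k}}{\lowerbound_{\algostep_k}}$ agrees with $\lowerbound_\infty(s,a)$ on this set. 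The key subtlety — and where this proof departs substantially from the finite case and from \cref{stm:cvi_correct} — is showing $\upperbound_\infty = \val$. The upper bound $\upperbound_\infty$ is a fixed point of \cref{eq:value_fixpoint} on $\States^{\reach}_\mathcal{F}$, but that equation has multiple fixed points in general; the largest such is $\val$ only when end-component-like behaviour is excluded. Here I would use \textbf{Absorption}: I would show that along $\getpair$-sampled optimal paths (w.r.t.\ the limit upper bound), the probability of reaching $\targetset \cup \sinkset$ within $N$ steps tends to $1$ as $N \to \infty$ uniformly enough that the "gap" $\upperbound_\infty - \val$, which is itself a subinvariant that can only be sustained by revisiting non-terminal states, must vanish. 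Concretely: if $\upperbound_\infty(s) > \val(s)$ for some reachable-via-optimal-actions $s$, then the gap propagates backward through the optimal transitions, contradicting that all such strategies reach $\targetset \cup \sinkset$ almost surely (so the gap is "diluted" to $0$ in the limit, since $\upperbound$ and $\lowerbound$ coincide on $\targetset \cup \sinkset$). This is the step I expect to be the main obstacle: making the "gap vanishes under Absorption" argument rigorous in the uncountable setting, handling the fact that the relevant set of states $\States^{\reach}_\mathcal{F}$ depends on the limiting $\mathcal{F}$, and dealing with measure-theoretic issues around supports of optimal strategies (the footnoted simplifications of $\States^{\reach}_\mathcal{F}$).

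For part (iii), termination, I would combine (ii) with a diagonal/monotone-convergence argument: since $\lowerbound_\infty(\initialstate) = \upperbound_\infty(\initialstate) = \val(\initialstate)$ almost surely (over the randomness of $\getpair$), and $\getprecision \to 0$ so that $\overapprox(\upperbound_\algostep(\initialstate), \getprecision(\algostep))$ and $\underapprox(\lowerbound_\algostep(\initialstate), \getprecision(\algostep))$ track $\upperbound_\algostep(\initialstate)$ and $\lowerbound_\algostep(\initialstate)$ ever more tightly, the stopping condition $\overapprox(\upperbound(\initialstate), \cdot) - \underapprox(\lowerbound(\initialstate), \cdot) < \varepsilon$ is eventually met with probability 1. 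One must be slightly careful that the bounds' convergence and the shrinking approximation tolerance interleave correctly — e.g.\ choose $\getprecision = 1/\algostep$ — but this is routine once (ii) is established. Finally, upon termination, $\val(\initialstate) \in [\lowerbound(\initialstate), \upperbound(\initialstate)]$ by part (i) and the reported pair is $\varepsilon$-optimal, so the algorithm is correct whenever it halts, and it halts almost surely. Throughout, the measure-theoretic technicalities (support of $\ProbabilityMDPsup$, definition of $\support(\mu)$, the "up to measure 0" caveats) are deferred to the appendix as in the rest of the paper.
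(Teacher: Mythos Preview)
Your correctness argument in part~(i) matches the paper's \cref{stm:bounds_order}, and your ``gap plus \textbf{Absorption}'' idea for the upper bound is the right intuition. However, your plan to show $\lowerbound_\infty = \val$ by the least-fixed-point argument of \cref{stm:cvi_correct} has a genuine gap under the \emph{weaker} assumption \textbf{B.BRTDP}. That assumption only guarantees dense sampling in $\States^{\reach}_\mathcal{F} \times \stateactions_\mathcal{F}$, i.e.\ near actions that are (infinitely often near-)optimal for the \emph{upper} bounds. Consequently you obtain $\lowerbound_\infty(s,a) = \ExpectedSumMDP{\mdptransitions}{s}{a}{\lowerbound_\infty}$ only for $a \in \stateactions_\mathcal{F}(s)$, not for all $a \in \stateactions(s)$. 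The characterization ``$\val$ is the least fixed point of \cref{eq:value_fixpoint}'' needs the supremum over \emph{all} available actions; with the restricted equation you cannot conclude $\lowerbound_\infty = \val$, because a value-optimal action need not lie in $\stateactions_\mathcal{F}(s)$ a priori (it does only \emph{after} one knows $\upperbound_\infty = \val$, which you have not yet established at that point in your argument).

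The paper sidesteps this entirely by never proving $\lowerbound_\infty = \val$ or $\upperbound_\infty = \val$ separately. Instead it works directly with the difference $\bounddifference_\algostep(s,a) = \upperbound_\algostep(s,a) - \lowerbound_\algostep(s,a)$: since both bounds are updated at the \emph{same} sampled pair, $\bounddifference_\infty$ satisfies an \emph{equality} $\bounddifference_\infty(s) = \ExpectedSumMDP{\mdptransitions}{s}{a_\infty^{\max}(s)}{\bounddifference_\infty}$ for a suitably chosen $a_\infty^{\max}(s) \in \maxU_\infty(s)$. One then lets $\bounddifference_* = \max_{s \in \States^{\reach}_+} \bounddifference_\infty(s)$, shows the witness set $\States^{\reach}_*$ is closed under $\mdptransitions(\cdot, a_\infty^{\max}(\cdot))$, and invokes \textbf{Absorption} to force $\bounddifference_* = 0$. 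This yields $\lowerbound_\infty(\initialstate) = \upperbound_\infty(\initialstate)$ (hence both equal $\val(\initialstate)$ by part~(i)) in one stroke, without needing the full fixed-point equation for $\lowerbound_\infty$ and without any appeal to least fixed points. Your gap argument for $\upperbound_\infty - \val$ is essentially this same maximum-principle reasoning applied to a different function; reworking it for $\bounddifference_\infty$ instead both closes the gap and halves the work.
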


\begin{proof}[Proof sketch]
	We again obtain monotonicity of the bounds, i.e.\ $\lowerbound_\algostep(s, a) \leq \lowerbound_{\algostep + 1}(s, a) \leq \val(s, a) \leq \upperbound_{\algostep + 1}(s, a) \leq \upperbound_\algostep(s, a)$ by induction on $\algostep$, using completely analogous arguments.
	
	By monotonicity, we also obtain well defined limits $\upperbound_\infty$ and $\lowerbound_\infty$.
	Further, we define the difference function $\bounddifference_\algostep(s, a) = \upperbound_\algostep(s, a) - \lowerbound_\algostep(s, a)$ together with its state based counterpart $\bounddifference_\algostep(s)$ and its limit $\bounddifference_\infty(s)$.
	We show that $\bounddifference_\infty(\initialstate) = 0$, proving convergence.
	To this end, similar to the previous proof, we prove that $\bounddifference_\infty$ satisfies a fixed point equation on $\States^{\reach}_+$ (see \textbf{B.BRTDP}), namely $\bounddifference_\infty(s) = \ExpectedSumMDP{\mdptransitions}{s}{a(s)}{\bounddifference_\infty}$ where $a(s)$ is a specially chosen \enquote{optimal} action for each state satisfying $\bounddifference_\infty(s, a(s)) = \bounddifference_\infty(s)$.
	Now, set $\bounddifference_* = \max_{s \in \States^{\reach}_+} \bounddifference_\infty(s)$ the maximal difference on $\States^{\reach}_+$ and let $\States^{\reach}_*$ be the set of witnesses obtaining $\bounddifference_*$.
	Then, $\mdptransitions(s, a(s), \States^{\reach}_*) = 1$: If a part of the transition's probability mass would move to a region with smaller difference, an appropriate update of a pair close to $(s, a(s))$ would reduce its difference.
	Hence, the set of states $\States^{\reach}_*$ is a \enquote{stable} subset of the system when following the actions $a(s)$.
	By \textbf{Absorption}, we eventually have to reach either the target $\targetset$ or the sink $\sinkset$ starting from any state in $\States^{\reach}_*$.
	Since $\bounddifference_\infty(s) = 0$ for all (sampled) states in $\targetset \union \sinkset$ and $\bounddifference_\infty$ satisfies the fixed point equation, we get that $\bounddifference_\infty(s) = 0$ for all states $\States^{\reach}_*$ and consequently $\bounddifference_\infty(\initialstate) = 0$.
\end{proof}

\section{Discussion} \label{sec:discussion}

%We now briefly discuss several aspects, both of theoretical as well as practical nature.

\subsection{Relation to Algorithms for Finite Systems and Discretization}\label{sec:discussion-related}

Our algorithm directly generalizes the classical value iteration as well as BRTDP for finite MDP by an appropriate choice of $\getpair$.
In value iteration, it proceeds in round-robin fashion, enumerating all state-action pairs.
Note that the algorithm immediately uses the results of previous updates, corresponding to the \emph{Gau{\ss}-Seidel variant} of VI; to exactly obtain synchronous value iteration, we would have to slightly modify the structure for saving the values.
In BRTDP, $\getpair$ simulates paths through the MDP and we update only those states encountered during the simulation.

Approaches based on discretization through, e.g., grids with increasing precision, essentially reduce the uncountable state space to a finite one.
This is also encompassed by $\getpair$, e.g.\ by selecting the grid points in round robin or randomized fashion.
However, our algorithm has the following key advantages when compared to classical discretization. 
%Firstly, classical discretization essentially divides the state space into distinct cells and constructs a finite, discrete approximation, with separate error estimates for each cell.
%For this, it has to explicitly approximate the transition dynamics of the discrete system.
%Our algorithm stays agnostic of concrete transition dynamics.
%Secondly, classical discretization considers all grid cells.
%Our algorithm allows to use smarter heuristics that potentially allow it to ignore large parts of the state space, see \cref{sec:discussion-practical}.\todo{But you could apply BRTDP }
Firstly, it avoids the need to grid the whole state space (typically into cells of regular sizes).
Secondly, in discretization, updating the value of one cell does not directly affect the value in other cells; in contrast in our algorithm, knowledge about a state fluently propagates to other areas (by using \cref{eq:lowerbound_state_action}) without being hindered by (arbitrarily chosen) cell boundaries.

\subsection{Extensions}\label{sec:discussion-theory}

We outline possible extensions and augmentations of our approach to showcase its versatility. 
%\todo{Add something about discrete components? See comment here.}
%of our assumptions.
%Not sure where to put it, but the following two bullet points could be included somewhere.
% Discrete control: Discrete components of the state and action space can be immediately incorporated by an appropriate choice of metrics (e.g. with a discrete metric for the discrete portion of the state space, d((s, x), (t, y)) = ||x-y|| if s = t and infinity otherwise).
% - "model does not even subsume finite MDPs"
%On the contrary, finite MDPs are trivially subsumed by choosing the discrete metric d(x, y) = 0 for x = y, and 1 otherwise. All of our assumptions (except D, which however poses a significant challenge already on finite MDP, see below) immediately hold.

\smallskip\noindent\textbf{Discontinuities}
Our Lipschitz assumption \textbf{C} actually is slightly stronger than required.
We first give an example of a system exhibiting discontinuities and then describe how our approach can be modified to deal with it.
More details are in \arxivref{\cref{app:extensions:discontinuity}}{\cite[\appendixref{}C.2]{techreport}}.
\begin{example}
	Consider a robot navigating a terrain with cliffs, where falling down a cliff immediately makes it impossible to reach the target.
	There, states which are barely on the edge may still reach the goal with significant probability, while a small step to the side results in falling down the cliff and zero probability of reaching the goal.
	%\hfill $\triangle$
\end{example}
To solve this example, one could model the cliff as a steep but continuous slope, which would make our approach still possible.
Unfortunately, this might not be very practical, since the Lipschitz constant then is quite large.

However, if we know of discontinuities, e.g.\ the location of cliffs in the terrain the robot navigates, both our algorithms can be extended as follows:
Instead of requiring $\val$ to be continuous on the whole domain, we may assume that we are given a (finite, decidable) partitioning of the state set $\States$ into several sets $\States_i$.
We allow the value function to be discontinuous along the boundaries of $\States_i$ (the cliffs), as long as it remains Lipschitz-continuous inside each $S_i$.
We only need to slightly modify the assumption on $\getpair$ by requiring that for any state-action pair $(s, a)$ with $s \in S_i$ we eventually get a nearby, similarly behaving state-action pair $(s', a')$ \emph{of the same region}, i.e.\ $s' \in \States_i$.
While computing the bounds of a particular state-action pair, e.g.\ $\upperbound(s, a)$, we first determine which partition $S_i$ the state $s$ belongs to and then only consider the stored values of states inside the region $S_i$.

\smallskip\noindent\textbf{Linear Temporal Logic}
In \cite{DBLP:conf/atva/BrazdilCCFKKPU14}, the authors extend BRTDP to LTL queries \cite{DBLP:conf/focs/Pnueli77}.
Several difficulties arise in the uncountable setting.
For example, in order to prove \emph{liveness} conditions, we need to solve the \emph{repeated reachability} problem, i.e.\ whether a particular set of states is reached infinitely often.
This is difficult even for restricted classes of uncountable systems, and impossible in the general case.
In particular, \cite{DBLP:conf/atva/BrazdilCCFKKPU14} relies on analysing end components, which we already identified as an unresolved problem.
We provide further insight in \arxivref{\cref{app:extensions:ltl}}{\cite[\appendixref{}C.3]{techreport}}.
Nevertheless, there is a straightforward extension of our approach to the subclass of \emph{reach-avoid} problems \cite{DBLP:journals/automatica/SummersL10} (or \emph{constrained reachability} \cite{DBLP:journals/iandc/TkachevMKA17})%, i.e.\ reaching a certain region while avoiding a different region
, see \arxivref{\cref{app:extensions:ltl:reach-avoid}}{\cite[\appendixref{}C.4]{techreport}}.

\subsection{Implementation and Heuristics}\label{sec:discussion-practical}

For completeness, we implemented a prototype of our BRTDP algorithm to demonstrate its effectiveness.
See \arxivref{\cref{app:experiments}}{\cite[\appendixref{}D]{techreport}} for details and an evaluation on both a one- and two-dimensional navigation model.
Our implementation is barely optimized, with no delegation to high-performance libraries. %, since we do not aim for any kind of competition with existing practical approaches.
Yet, these non-trivial models are solved in reasonable time.
However, since we aim for assumptions that are as general as possible, one cannot expect our generic approach perform on par with highly optimized tools.
Our prototype serves as a proof-of-concept and does not aim to be competitive with specialized approaches. We highlight again that the goal of our paper is \emph{not} to be practically efficient in a particular, restricted setting, but rather to provide general assumptions and theoretical algorithms applicable to all kinds of uncountable systems.

Aside from several possible optimizations concerning the concrete implementation, we suggest two more general directions for heuristics:

\smallskip\noindent\textbf{Adaptive Lipschitz constants}
	As an example, suppose that a robot is navigating mostly flat land close to its home, but more hilly terrain further away.
	The flat land has a smaller Lipschitz constant than the hilly terrain, and thus here we can infer tighter bounds.
	More generally, given a partitioning of the state space and local Lipschitz constants for every subset, we use this local knowledge when computing $\lowerboundstored$ and $\upperboundstored$ instead of using the global Lipschitz constant, which is the maximum of all local ones.
	See \arxivref{\cref{app:extensions:discontinuity}}{\cite[\appendixref{}C.2]{techreport}} for details.

\smallskip\noindent\textbf{$\getpair$-heuristics}
	In \cref{sec:vi-assms-sampling}, we mentioned two simple implementations of $\getpair$.
	Firstly, we can discretize both state and action space, yielding each state-action pair in the discretization for a finite number of iterations, choosing a finer discretization constant, and repeating the process until convergence.
	Assuming that we can sample all state-action pairs in the discretization, this method eventually samples arbitrarily close to \emph{any} state-action pair in $\States \times \stateactions$ and thus trivially satisfies the sampling assumption.
	This intuitively corresponds to executing interval iteration \cite{DBLP:journals/tcs/HaddadM18} on the (increasingly refined) discretized systems.
	Note that this approach completely disregards the reachability probability of certain states and invests the same computational effort for all of them.
	In particular, it invests the same amount of computational effort into regions which are only reached with probability $10^{-100}$ as in regions around the initial state $\initialstate$.
	
	Thus, a second approach is to sample a path through the system at random, following random actions.
	This approach updates states roughly proportional to the probability of being reached, which already in the finite setting yields dramatic speed-ups~\cite{DBLP:conf/concur/KretinskyM19}.
		
	However, we can also use further information provided by the algorithm, namely the upper bounds.
	As mentioned in \cite{DBLP:conf/atva/BrazdilCCFKKPU14}, following \enquote{promising} actions with a large upper bound proves to be beneficial, since actions with small upper bound likely are suboptimal.
	To extend this idea to the general domain, we need to apply a bit of care.
	In particular, it might be difficult to select exactly from the optimal set of actions, since already $\argmax_{a \in \stateactions(s)} \upperbound(s, a)$ might be very difficult to compute.
	Yet, it is sufficient to choose some constant $\xi > 0$ and over-approximate the set of $\xi$-optimal actions in a given state, randomly selecting from this set.
	This over-approximation can easily be performed by, for example, randomly sampling the set of available actions $\stateactions(s)$ until we encounter an action close to the optimum (which can approximate due to our assumptions).
	By generating paths only using these actions, we combine the previous idea of focussing on \enquote{important} states (in terms of reachability) with an additional focus on \enquote{promising} states (in terms of upper bounds).
	This way, the algorithm \emph{learns} from its experiences, using it as a guidance for future explorations.
	
	More generally, we can easily apply more sophisticated learning approaches by interleaving it with one of the above methods.
	For example, by following the learning approach with probability $\nu$ and a \enquote{safe} method with probability $1 - \nu$ we still obtain a safe heuristic, since the assumption only requires limit behaviour.
	As such, we can combine our approach with existing, learning based algorithm by following their suggested heuristic and interleave it with some sampling runs guided by the above ideas.
	In other words, this means that the learning algorithm can focus on finding a reasonable solution quickly, which is then subsequently \emph{verified} by our approach, potentially \emph{improving} the solution in areas where the learner is performing suboptimally.
	On top, the (guaranteed) bounds identified by our algorithm can be used as feedback to the learning algorithm, creating a positive feedback loop, where both components improve each other's behaviour and performance.
%	As observed in \cite{DBLP:conf/atva/BrazdilCCFKKPU14}, using a simulation-based $\getpair$ and applying a guidance heuristic can improve the practical performance of the algorithm without sacrificing correctness.
%	Intuitively, we focus (i)~on parts of the state space that are likely to be reached, since our simulations sample according to the transition function; and (ii)~on actions which are promising, e.g.\ by choosing action with a high upper bound or by using learning mechanism for deciding the actions.
%	
%	In particular, the weaker variant of the \textbf{State-Action Sampling} assumption (\textbf{B.BRTDP}) does not require to sample provably suboptimal states at all, which may yield drastic performance gains over, for example, naive discretization of the whole state space.
%	In particular, we avoid computations in unreachable parts of the state space while naive discretization does not.
%	See~\cite{DBLP:conf/concur/KretinskyM19} for applications of this idea in the finite setting and \cref{app:extensions:learning} for more details on how to include sophisticated learning mechanism as a guidance for $\getpair$.

\section{Conclusion} \label{sec:conc}

In this work, we have presented the first anytime algorithm to tackle the reachability problem for MDP with uncountable state- and action-spaces, giving both correctness and termination guarantees under general assumptions.
The experimental evaluation of our prototype implementation shows both promising results and room for improvements. 

On the theoretical side, we conjecture that \textbf{Assumption D: Absorption} can be weakened if we complement it with an automatic procedure that finds and treats problematic parts of the state space of a certain kind, similar to the collapsing approach on finite MDP \cite{DBLP:journals/tcs/HaddadM18,DBLP:conf/atva/BrazdilCCFKKPU14}.
Note that as the general problem is undecidable, some form of \textbf{Absorption} will remain necessary.
%, using a combination of \textbf{Sampling} and \textbf{Value Lipschitz continuity}.
On the practical side, we aim for a more sophisticated tool, applying our theoretical foundation to the full range of MDP, including discrete discontinuities. Moreover, we want to combine the tool with existing ways of identifying the Lipschitz constant.

\clearpage

\bibliography{ref}

\clearpage
\arxivref{
\appendix
\section{Further Insights} \label{app:insights}

\subsection{Details on Finite Value Iteration} \label{app:insights:finite_vi}

Value iteration is a technique used to solve, e.g.\ reachability queries in the finite MDP setting.
It essentially amounts to applying \emph{Bellman iteration} \cite{bellman1966dynamic} (see \cref{eq:vi}) corresponding to the fixed point equation in \cref{eq:value_fixpoint} \cite[Sec.~4.2]{DBLP:conf/sfm/ForejtKNP11}.
It is known that on finite MDP this iteration converges to the true value $\val$ in the limit from below, i.e.\ for all states $s$ we have (i)~$\lim_{n \to \infty} v_n(s) = \val(s)$ and (ii)~$v_n(s) \leq v_{n+1}(s) \leq \val(s)$ for all iterations $n$ \cite[Thm.~7.2.12]{DBLP:books/wi/Puterman94}\footnote{Note that reachability is a special case of \emph{expected total reward}, obtained by assigning a one-time reward of $1$ to each goal state.}.
It is not difficult to construct a system where convergence up to a given precision takes exponential time, but in practice VI often is much faster than methods based on \emph{linear programming} (LP) \cite[Thm.~10.105]{DBLP:books/daglib/0020348}, which in theory has worst-case polynomial runtime and yields precise answers \cite{DBLP:journals/combinatorica/Karmarkar84}.
An important practical issue of VI is the absence of a \emph{stopping criterion}, i.e.\ a straightforward way of determining in general whether the current values $v_n(s)$ are close to the true value function $\val(s)$, as discussed in, e.g.\ \cite[Sec.~4.2]{DBLP:conf/sfm/ForejtKNP11}.

Traditionally, value iteration computes a value for every state of the system at each step.
However, the iteration can also be executed \emph{asynchronously}.
There, the update order may be chosen by heuristics, as long as fairness constraints are satisfied, i.e.\ eventually all states get updated (see e.g.\ \cite{DBLP:books/daglib/0067089}).
%This is essential in the general setting, since we clearly cannot update all uncountably many states simultaneously in general.
%instead have to perform updates asynchronously on a finite subset.
Additionally, while VI typically assigns values to each state, one can also store and update values for each state-action pair separately, i.e.\ updating $v_{n+1}(s, a) = \ExpectedSumMDP{\mdptransitions}{s}{a}{v_n}$ and derive $v_n(s)$ by computing $v_n(s) := \max_{a \in \stateactions(s)} v_n(s, a)$.
%This may seem unnecessary at first, since we require more memory than in the state-value approach.
%However, this idea is essential when dealing with uncountable action-spaces.
%We use these ideas together to tackle the reachability problem on the general domain.

\subsection{Details on BRTDP for Finite MDP} \label{app:insights:finite_brtdp}

\begin{algorithm}[t]
	\caption{The BRTDP algorithm for finite MDP without end components.} \label{alg:finite_brtdp}
	\begin{algorithmic}[1]
		\Require MDP $\MDP$, state $\initialstate$, precision $\varepsilon$, target set $\targetset$, sink set $\sinkset$, sampler $\getpair$.
		\Ensure $\varepsilon$-optimal values $(l, u)$.
		\State $\sampled \gets \emptyset$, $s_1 \gets \initialstate$, $\upperbound(\cdot, \cdot) \gets 1$, $\lowerbound(\cdot, \cdot) \gets 0$ \Comment{Initialize}
		\While{$\upperbound(\initialstate) - \lowerbound(\initialstate) \geq \varepsilon$}
			\State $s, a \gets \getpair(\upperbound)$ \Comment{Get next state-action pair}
			\If{$s \in \targetset$} \quad $\lowerbound(s, \cdot) \gets 1$ \label{line:alg:brtdp:target} \Comment{Handle special cases}
			\ElsIf{$s \in \sinkset$} \quad $\upperbound(s, \cdot) \gets 0$ \label{line:alg:brtdp:sink}
			\Else \Comment{Update upper and lower bounds}
				\State $\upperbound(s, a) \gets \ExpectedSumMDP{\mdptransitions}{s}{a}{\upperbound}$  \label{line:alg:brtdp:update_u}
				\State $\lowerbound(s, a) \gets \ExpectedSumMDP{\mdptransitions}{s}{a}{\lowerbound}$ \label{line:alg:brtdp:update_l}
			\EndIf
			\State $\sampled \gets \sampled \union \{(s, a)\}$
		\EndWhile
		\State \Return $(\lowerbound(\initialstate), \upperbound(\initialstate))$
	\end{algorithmic}
\end{algorithm}

We briefly summarize the ideas of BRTDP, initially presented in \cite{DBLP:conf/icml/McMahanLG05} and further developed in \cite{DBLP:conf/atva/BrazdilCCFKKPU14}.
We present a formal description of BRTDP, adapted from \cite{DBLP:conf/atva/BrazdilCCFKKPU14}, in \cref{alg:finite_brtdp}.
BRTDP deals with reachability on a \emph{finite} MDP $\MDP = (\States, \Actions, \stateactions, \mdptransitions)$, i.e.\ $\cardinality{\States} < \infty$ and $\cardinality{\Actions} < \infty$, with a given target set $\targetset \subseteq \States$ and precision $\varepsilon > 0$.
The central idea is to apply (asynchronous) value iteration to compute both lower and upper bounds, iterating until the bounds are $\varepsilon$-close to each other.

The MDP is assumed to have no \emph{end components} except in the target set $\targetset$ and a given \emph{sink} set $\sinkset \subseteq \States$.
Intuitively, end components are parts of the state space where the system can remain forever under a particular strategy.
For example, suppose there are two states $s_1, s_2$ where $\mdptransitions(s_1, a_1, s_2) = \mdptransitions(s_2, a_2, s_1) = 1$, i.e.\ the system can go back and forth between $s_1$ and $s_2$ indefinitely.
Thus $(\{s_1, s_2\}, \{a_1, a_2\})$ is an end component.
Technically, such end components introduce additional fixed points to the equation of \cref{eq:value_fixpoint} and applying the value iteration \cref{eq:vi} to upper bounds would not converge to the true value function, see \cite[Ex.~1]{DBLP:conf/atva/BrazdilCCFKKPU14} for more details.
By excluding end components (except targets or sinks), we basically get that $\ProbabilityMDP<\MDP, s><\strategy>[\reach (\targetset \union \sinkset)] = 1$ for any state $s$ and strategy $\strategy$, i.e.\ no matter what we do, with probability 1 we eventually end up in either the sink or the target.
The algorithm further requires that we cannot reach the target once we enter the sink, i.e.\ $\val(s) = 0$ for all states $s \in \sinkset$.
Clearly, the upper bounds of all states in the sink can safely be set to $0$.
Since we excluded any additional fixed point of \cref{eq:value_fixpoint} by our end component assumption, iterations from below and from above converge to the true value.

The algorithm as presented in \cite{DBLP:conf/atva/BrazdilCCFKKPU14} repeatedly samples a path until a target or a sink state is visited and then back-propagates the upper and lower bounds along this path.
In our formulation in \cref{alg:finite_brtdp}, this is captured by $\getpair$ as follows: $\getpair$ first samples a path and then returns the states of the path in reverse order, i.e.\ starting at the target or sink state and ending at the initial state.
Hence, the updates in \cref{line:alg:brtdp:update_u} and \cref{line:alg:brtdp:update_l} are executed on all states on the path, and the information of whether a target or sink was reached is back-propagated.
This is why we allow $\getpair$ to be a \emph{stateful} procedure.

One can also apply this back-propagation globally on all states, which effectively is done in interval iteration \cite{DBLP:journals/tcs/HaddadM18}.
In contrast to that, sampling allows the algorithm to focus on \enquote{important} parts of the system, instead of spending effort on unimportant states.
An interesting observation of \cite{DBLP:conf/atva/BrazdilCCFKKPU14} is that the upper bounds can guide this sampling efficiently.
By choosing actions with a promising upper bound, we always follow the actions which, given our current information, could be the best action, a concept sometimes called \emph{optimism in the face of uncertainty}.
We indicate this in \cref{alg:finite_brtdp} by giving $\getpair$ a parameter $\upperbound$.

%Our version of the algorithm is not required to sample paths but instead updates arbitrary state-action pairs.
%By choosing $\getpair$ appropriately, it instantiates and thus unifies both BRTDP and interval iteration, as we explain further in \cref{sec:extensions:learning}.
\clearpage
\section{Additional Discussion of our Assumptions} \label{app:assumptions}

\subsection{Basic Assumptions} \label{app:assumptions:basic}
We first discuss the standard, usually implicit assumptions.
\paragraph*{Assumption A1: Metric Space}
We require metrics on the state and action spaces in order to define Lipschitz continuity.
More generally, we require a notion of distance to be able to extrapolate from a particular state to its neighbours.
In order to use this notion in the algorithm, the metric naturally has to be computable.
This assumption is given on practically all reasonable systems, in particular when considering a well behaved subset of an Euclidean space, e.g.\ $\States \times \Actions \subseteq \Reals^d$.
Hence, the assumption \textbf{Metric Space} is made implicitly in many works, with $\States$ often assumed to be a subset of $\Reals^d$.
In particular, MDP are often used to model physical processes, which usually are characterized by real valued variables or, more generally, variables which allow for a natural notion of distance.

Note that discrete state and action spaces (or state and action spaces with discrete components) satisfy this assumption, too, using the discrete metric, i.e.\ $\metricstates(x, y) = 0$ if $x = y$ and $1$ otherwise.
Specifically, due to our compactness assumption, the discrete parts of the state space cannot be infinite.
So, in essence, this metric lets us investigate each discrete component separately without extrapolating between them.

\paragraph*{Assumptions A2, A3, A5, and A6: Maximum Approximation \& Transition Approximation}

These assumptions are an immediate consequence of our goal to replicate \cref{eq:vi}.
\textbf{Maximum Approximation} essentially only requires that we can compute / approximate $\stateactions(s)$ and somehow describe this set.
Similarly, \textbf{Transition Approximation} only imposes some minimal knowledge about $\mdptransitions(s, a)$.
Both assumptions can be realised through, for example, dense sampling of $f(a)$ and $g'(s') \coloneqq \mdptransitions(s, a, s') \cdot g(s')$.
See \cref{fig:integration_approximation} for an illustration of the \textbf{Transition Approximation} case.
Note that we only require that $g$ can be approximated up to a certain precision.
However, the overall error introduced by computing these approximations of $g$ can be bounded since we consider probability measures:
In particular, suppose the approximations we obtain are $\hat{g}$, i.e.\ $g(x) - \varepsilon \leq \hat{g}(x) \leq g(x)$.
Then $\ExpectedSumMDP{\mdptransitions}{s}{a}{g} - \varepsilon \leq \ExpectedSumMDP{\mdptransitions}{s}{a}{\hat{g}} \leq \ExpectedSumMDP{\mdptransitions}{s}{a}{g}$.
For \textbf{Maximum Approximation}, the \enquote{default} method is even simpler:
We only need to sample actions from $\stateactions(s)$ with a distance of at most $\varepsilon / L_f$ (where $L_f$ is the Lipschitz constant of $f$).
Then, we can evaluate $f$ at all these positions and know that the true maximum of $f$ is at most $\varepsilon$ larger than the largest sampled value (by Lipschitz continuity of $f$).
The cases of over-approximation are exactly analogous.

\begin{figure}
	\centering
	\begin{tikzpicture}
		\begin{axis}[xmin=-3,xmax=3,ymin=0,samples=100,width=8cm,height=4cm,
			axis x line=middle,
			axis line style={-},
			enlarge x limits=0,
			enlarge y limits=0,
			xlabel=$s'$,ylabel={$\mdptransitions(s, a, s') \cdot g(s')$},
			x label style={anchor=north},
			every tick/.style={-},
			declare function = { f(\x) = 1/(sqrt(2*pi))*exp(-((\x)^2)/(2)) * abs((\x - 0.5) / 2 - floor((\x - 0.5) / 2) - 0.5); }
			]
			\addplot[-,no marks,samples=1000] {f(x)};
			\edef\points{-2,-2.3,-2.5,-1.8,-1.4,-1,-0.5,0.1,0.5,0.75,0.9,1.3,1.6,1.9,2,2.3,2.6,3};
			\pgfmathsetmacro{\slope}{15};
			
			\addplot[only marks,mark=x,samples at=\points] {f(x)};
			
			\pgfplotsinvokeforeach{-2,-2.3,-2.5,-1.8,-1.4,-1,-0.5,0.1,0.5,0.75,0.9,1.3,1.6,1.9,2,2.3,2.6,3}{
				\fill[lightgray] (axis cs:{#1 - f(#1) / tan(\slope)},0) -- (axis cs:#1,{f(#1)}) -- (axis cs:{#1 + f(#1) / tan(\slope)},0);
			}
		\end{axis}
	\end{tikzpicture}
	\caption{
		Illustration to show how we can approximate the successor expectation $\ExpectedSumMDP{\mdptransitions}{s}{a}{g}$ for a Lipschitz continuous $g$ through sampling.
		Here, we chose a Gaussian as successor density as and $g(s)$ a zig-zag function.
		As we know that both functions are Lipschitz continuous, the product is Lipschitz continuous, too.
		Hence, the grey area (which can be computed by evaluating $\mdptransitions(s, a, s') \cdot g(s')$) gives us a safe under-approximation of the total integral.
		By additionally applying the same idea to $\mdptransitions(s, a, s')$ alone, we can safely under-approximate the total probability mass we have already considered.
		Together, we obtain a safe and convergent under-approximation of $\ExpectedSumMDP{\mdptransitions}{s}{a}{g}$.
		Note that this only required Lipschitz continuity of the density function of $\mdptransitions(s, a)$.
		If we know that we are dealing with, e.g., a Gaussian or a uniform distribution, we can heavily optimize this process.
	} \label{fig:integration_approximation}
\end{figure}

\paragraph*{Assumption A4: Target Computability}

Clearly, we need to be able to decide whether a given state is a target state or not, otherwise the computational problem is not well specified.
We highlight that we do \emph{not} require an explicit description of the target set $\targetset$, we only need a procedure to decide $s \in \targetset$.
As an example, consider the uncountable state space of the real numbers in $[0,1]$. Then, a predicate like $s < \sqrt(2)-1$ allows us to decide for every $s \in [0,1]$ whether it is a target state or not.

% This was merged into Sec 3.1.2
%\subsection{State-Action Sampling} \label{app:assumptions:sampling}
%
%This assumption essentially provides us with some way of understanding the state-action space and the associated system dynamics.
%For example, if we can sample from the set of actions and from the distributions of $\mdptransitions$, $\getpair$ can be implemented by sampling paths of random length, following random actions.
%Alternatively, if $\States \times \stateactions$ are a bounded subset of $\Reals^d$, we can select random points in that space or consider increasingly dense grids.
%Observe that the latter essentially amounts to discretization of the state-action space with increasing precision.
%
%These two simple ideas serve to show that this assumption is easy to satisfy. For more sophisticated insights about $\getpair$, we refer to \cref{sec:discussion-related}, \cref{sec:discussion-practical} and \cref{app:extensions:learning}.

\subsection{Lipschitz Continuity} \label{app:assumptions:lipschitz}

\subsubsection{Lipschitz Continuity of Transition Function implies Lipschitz Continuity of the Value Function}\label{app:assumptions:lipschitz:implication}

We briefly argue that continuity of the transition function (w.r.t.\ total variation) implies continuity of the value function, as similarly shown in, e.g., \cite[Thm.~1]{DBLP:journals/ejcon/AbateKLP10} or \cite[Thm.~3]{DBLP:conf/qest/SoudjaniA11}.
Thus, assume that $\mdptransitions$ is Lipschitz continuous, i.e.\ we have that
\begin{equation*}
	\totalvariation(\mdptransitions(s, a), \mdptransitions(s', a')) \leq L \cdot \metricproduct((s, a), (s', a')).
\end{equation*}
Now, recall that
\begin{equation*}
	\val(s, a) \coloneqq \integral<\hat{s} \in \States>{\val(\hat{s})}{\mdptransitions(s, a)}.
\end{equation*}
Since $0 \leq \val(s) \leq 1$ for all $s \in \States$, we immediately get
\begin{align*}
	\norm{\val(s, a) - \val(s', a')} & \leq \abs*{\integral<\hat{s} \in \States>{\val(\hat{s})}{\mdptransitions(s, a)} - \integral<\hat{s} \in \States>{\val(\hat{s})}{\mdptransitions(s', a')}} \\
		& \leq \totalvariation(\mdptransitions(s, a), \mdptransitions(s', a')) \\
		& \leq L \cdot \metricproduct((s, a), (s', a')).
\end{align*}
To conclude Lipschitz continuity of the value function, one further step is needed.
Recall that $\val(s) \coloneqq \sup_{a \in \stateactions(s)} \val(s, a)$.
If the set of available actions would change abruptly between two nearby states $s$ and $s'$, the continuity of $\val(s, a)$ would not allow us to conclude anything about the continuity of $\val$.\footnote{Observe that this is the underlying reason for our \textbf{State-Action Sampling} assumption, too.}
Thus, we furthermore need a \enquote{continuous} behaviour of the action space.
Formally, we require that for each state-action pair $(s, a)$ and state $s'$, there exists an action $a' \in \stateactions(s')$ such that $\metricproduct((s,a), (s', a')) \leq L' \cdot \metricstates(s, s')$ for some $L' > 0$.\footnote{This assumption is directly implied by the typical, much more restrictive assumption of requiring that $\stateactions$ is constant.}
Then, we can conclude that $\val$ is Lipschitz continuous, too (recall that $\metricstates$ and $\metricproduct$ are compatible).

\subsubsection{Lipschitz Continuous Value Function}

Several works assume Lipschitz continuity of, e.g., the transition function and derive Lipschitz continuity of the value function, while we only assume the latter.
We demonstrate that our assumption is strictly weaker through a small example where the transition function is not even continuous, yet the value function is Lipschitz.

To this end, let $\States = \Actions = [0, 1]$, $\stateactions(s) = \Actions$ and $\mdptransitions(s, a) = \{1 \mapsto 1\}$ if $s < a$, $\{0 \mapsto 1\}$ if $s > a$, and $\{s \mapsto 1\}$ otherwise.
In other words, if $s < a$, we immediately proceed to state $1$, if $s > a$ we go back to $0$, and for $s = a$ we stay on the spot.
As such, a slight change in either state or action may lead to a large change in the transition dynamics.
More concretely, consider the state $s=0.5$ and the action $a=0.5$. We slightly change the state by $\varepsilon$ and get
$\mdptransitions(0.5-\varepsilon,0.5) = \{0 \mapsto 1\}$ or
$\mdptransitions(0.5+\varepsilon,0.5) = \{1 \mapsto 1\}$, which completely differ from each other as well as from the original transition 
$\mdptransitions(0.5,0.5) = \{0.5 \mapsto 1\}$.
Thus, the transition function is not continuous.

Yet, when we choose $\targetset = \{1\}$, the value function is constant (and hence Lipschitz continuous), since all states can trivially reach the target by playing action $1$.

\subsubsection{Formal Definition of the Frequency Markov Chain} \label{app:insights:lipschitz}

In \cref{example:lipschitz}, we formally consider the following MDP (which actually is a Markov chain).
Let $\States = [0, 1] \union \{s_+, s_-\}$ (we add two distinct state for simplicity, however the example can easily be transformed to a completely continuous one) and $\Actions = \{a\}$.
Now, choose some frequency $k \in \Naturals$.
We define the MDP such that the value function is a periodic function with frequency $k$.
Let $f \colon [0, 1] \to [0, 1]$ be a Lipschitz continuous function with $f(0) = f(1)$, e.g., an appropriately scaled sine or similar.
As such, define the transition function such that $\mdptransitions(s, a) = \{s_+ \mapsto f(s \cdot k), s_- \mapsto 1 - f(s \cdot k)\}$ for $0 \leq s \leq \frac{1}{k}$ and $\mdptransitions(s, a) = \{s - \frac{1}{k} \mapsto 1\}$.
Informally, the states between $0$ and $\frac{1}{k}$ obtain a value according to $f$ scaled to this interval and all other states simply move $\frac{1}{k}$ to the left.
Observe that $\val$ is Lipschitz continuous as long as $f$ is Lipschitz continuous.
See \cref{fig:example_lipschitz knowledge} for an illustration of the resulting value function where $f$ is a triangle function and $k = 4$.
Now, observe that for any set of finitely many sampled rational points $\States' \subseteq \States \intersection \mathbb{Q}$, we can choose $k$ and $f$ such that $\val(s) = 1$ for all $s \in \States'$, but there are also uncountably many $s'$ with $\val(s') < \varepsilon$ for every $\varepsilon>0$.
Hence, even though $\val$ is Lipschitz continuous, without knowing the associated constant we cannot conclude anything about neighbouring points.

\clearpage
\section{Extensions and Relaxations} \label{app:extensions}

We outline several possible extensions and augmentations of our approach to showcase the versatility of our assumptions.

\subsection{Finite horizon (step-bounded) reachability} \label{app:step-bounded}

By using the same idea as in \cite{DBLP:conf/atva/BrazdilCCFKKPU14}, our approach is directly able to handle finite horizon reachability, also known as \emph{step-bounded} reachability, i.e.\ the probability of reaching a given target set within $n$ steps.
We simply extend all bound functions with a step counter, e.g.\ $\upperbound(s)$ becomes $\upperbound(s, i)$, denoting an upper bound on the probability of reaching $\targetset$ within $i$ steps.
Similarly, $\getpair$ is supposed to additionally return a step number $i$ between $0$ and $n$.
We then update the lower bound by
\begin{equation*}
	\lowerboundstored(s, a, i) \gets \underapprox(\ExpectedSumMDP{\mdptransitions}{s}{a}{\lowerbound(\cdot, i-1)}, \getprecision)
\end{equation*}
and analogously for the upper bound $\upperboundstored$.
Note that in this case we do not need the sink set $\sinkset$, since after $n$ steps we know that we will not be able to reach $\targetset$ any more, i.e.\ $\val(s, n + 1) = 0$ for all states $s \in \States$.

\subsection{Discontinuities and Local Lipschitz Constants} \label{app:extensions:discontinuity}

\begin{figure}[t]
	\centering
	\begin{tikzpicture}
		\begin{axis}[xmin=0,xmax=2,ymin=0,ymax=1,samples=50,width=9cm,height=3cm,
				axis x line=middle,
				axis line style={-},
				enlarge x limits=0,
				enlarge y limits=0,
				xlabel=$\States$,ylabel=\empty,
				xtick={0.5,1,1.5},xticklabels={0.5,1,1.5},
				x label style={anchor=north},
				ytick={0,0.5,1},
				every tick/.style={-},
			]
			\addplot[-,no marks,samples=50] plot coordinates {
				(0,0.6)
				(0.4,0.4)
				(0.65,0.575)
				(0.8,0.5)
				(1,0.6)
			};
			\addplot[only marks,mark=o] plot coordinates {
				(0.4,0.4)
				(0.8,0.5)
			};
			\addplot[-,no marks,samples=50] plot coordinates {
				(0.0,0.1)
				(0.4,0.3)
				(0.55,0.225)
				(0.8,0.4)
				(1.0,0.3)
			};
			\addplot[only marks,mark=o] plot coordinates {
				(0.4,0.3)
				(0.8,0.4)
			};

			\addplot[-,no marks,dashed] coordinates {(1, 0) (1, 1)};

			\addplot[-,no marks,samples=50] plot coordinates {
				(1,1)
				(1.05,1)
				(1.2,0.7)
				(1.3,0.9)
				(1.45,0.6)
				(1.5,0.7)
			};
			\addplot[only marks,mark=o] plot coordinates {
				(1.05, 1)
				(1.2,0.7)
				(1.45,0.6)
			};
			\addplot[-,no marks,samples=50] plot coordinates {
				(1,0.7)
				(1.05,0.8)
				(1.175,0.55)
				(1.2,0.6)
				(1.425,0.15)
				(1.45,0.2)
				(1.5,0.1)
			};
			\addplot[only marks,mark=o] plot coordinates {
				(1.05,0.8)
				(1.2,0.6)
				(1.45,0.2)
			};

			\addplot[-,no marks,dashed] coordinates {(1.5, 0) (1.5, 1)};

			\addplot[-,no marks,samples=50] plot coordinates {
				(1.5,0.7)
				(1.7,0.5)
				(2,0.8)
			};
			\addplot[only marks,mark=o] plot coordinates {
				(1.7,0.5)
			};
			\addplot[-,no marks,samples=50] plot coordinates {
				(1.5,0.2)
				(1.7,0.4)
				(2,0.1)
			};
			\addplot[only marks,mark=o] plot coordinates {
				(1.7,0.4)
			};
		\end{axis}
	\end{tikzpicture}
	\caption{Example of the function extension on the set $[0, 2]$ when provided with a partitioning $S_1 = [0, 1]$, $S_2 = [1, 1.5]$, $S_3 = [1.5, 2]$ and different Lipschitz constants $\lipschitz_1 = 0.5$, $\lipschitz_2 = 2$, $\lipschitz_3 = 1$.
	As in \cref{fig:extrapolation_example}, the dots represent stored values and solid lines represent the extrapolated functions.
	} \label{fig:discontinuity_example}
\end{figure}
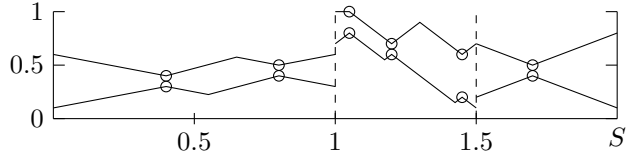

A first, simple extension is given by allowing the value function to be discontinuous at known locations in a well behaved way.
In particular, we assume that we are given a finite partitioning of the state set $\States$ into several sets $\States_i$ for $i \in \{1, \dots, n\}$, i.e.\ $\States = \Union_{i = 1}^n \States_i$ and $\States_i$ pairwise disjoint.
We allow the value function to be discontinuous along the boundaries of $\States_i$, as long as it remains Lipschitz-continuous inside $S_i$.
Note that we do \emph{not} require any special property of $\States_i$ except decidability and Lipschitz continuity of the value function in its interior.
In particular, $\States_i$ does not need to be closed, convex or have a closed form representation.
We only need to slightly modify the assumption on $\getpair$ by requiring that for any state-action pair $(s, a)$ with $s \in S_i$ we eventually get a nearby, similarly behaving state-action pair $(s', a')$ where we also have $s' \in \States_i$.

The only necessary change to the algorithms is the following:
While computing the bounds of a particular state-action pair, e.g.\ $\upperbound(s, a)$, we first determine which partition the state belongs to, i.e.\ find the unique set $S_i$ such that $s \in S_i$, and then only consider states sampled inside this partition.
Then, we define
\begin{equation*}
	\upperbound(s, a) = {\min}_{(s', a') \in \sampled, s' \in S_i} \left( \upperboundstored(s', a') + \lipschitzproduct \cdot \metricproduct((s, a), (s', a')) \right)
\end{equation*}
and $\lowerbound$ analogously, omitting the obvious special cases.
Recall that $\sampled$ is the set of all state-action pairs sampled so far.
It is easy to see that correctness is preserved.
For termination, we can still establish the respective fixed point equations in the same way.

Similarly, we can allow for local Lipschitz constants for each state, i.e.\ an oracle assigns a Lipschitz constant to each state or state-action pair.
Then, whenever we want to compute the bounds of any particular state, we simply use the individual Lipschitz constants for each state-action pair in $\sampled$.
To obtain termination, we only require that the Lipschitz constants are universally bounded.

%, assuming that the algorithm does not sample states exactly on the boundary.
%Observe that by the above assumption, the boundary is not part of $\States^{\reach}$ and $\States^{\reach}_+$, thus we can simply skip such states when returned by $\getpair$.

\subsection{Linear Temporal Logic} \label{app:extensions:ltl}

We discuss how \emph{Linear Temporal Logic} (LTL) \cite{DBLP:conf/focs/Pnueli77} can be approached for uncountable MDP using our approach.
We only briefly define LTL and direct the interested reader to related work, e.g., \cite[Chap.~5]{DBLP:books/daglib/0020348}, \cite[Sec.~7.2]{DBLP:conf/sfm/ForejtKNP11}.
See \cite{DBLP:journals/iandc/TkachevMKA17} for further discussion of LTL on uncountable MDP.
Let $AP$ be a finite, non-empty set of atomic propositions and $a \in AP$ an arbitrary proposition.
Such propositions could, for example, describe \enquote{variable x is larger than 5} or \enquote{the system is in an unsafe state}.
An LTL formula then is given by the following syntax
\begin{equation*}
	\phi ::= a \mid \lnot \phi \mid \phi \land \phi \mid \ltlNext \phi \mid \phi \ltlUntil \phi
\end{equation*}
with the usual shorthand definitions $\lfalse = a \land \lnot a$, $\ltrue = \lnot \lfalse$, and $\phi \lor \psi = \lnot (\phi \land \psi)$.
LTL is evaluated over (infinite) sequences of words, i.e.\ elements of ${(\powerset(AP))}^\omega$.
The logical connectives essentially impose restrictions on the \enquote{current} valuation, i.e.\ the atomic propositions at the beginning of the word.
The $\ltlNext \phi$ operator requires that $\phi$ holds in the next step, while $\phi \ltlUntil \psi$ demands that the formula $\phi$ holds at every step until $\psi$ holds.
Two common derivations are $\ltlFinally \phi := \ltrue \ltlUntil \phi$, requiring that $\phi$ eventually holds in the \textbf{f}uture, and $\ltlGlobally \phi := \lnot \ltlFinally \lnot \phi$ requires that $\phi$ holds \textbf{g}lobally at every position.
As an example, $a \land (X b \lor \ltlFinally \ltlGlobally \lnot a)$ requires that in the first step we have $a$ and either we have $b$ in the next step, or eventually we will never see $a$ again, i.e.\ $a$ only is true finitely often.

In the finite setting, we can equip an MDP with a valuation mapping $\nu \colon \States \to \powerset(AP)$, assigning to each state a set of atomic propositions which hold in this state.
This mapping can directly be lifted to paths, i.e.\ given a path $\infinitepath$ we derive the respective word $\nu(\infinitepath) = \nu(\infinitepath(1)) \nu(\infinitepath(2)) \cdots$.
Thus, we can pose \emph{quantitative LTL queries}, e.g.\ \enquote{what is the maximal probability of satisfying the given formula?}.
It is known that such queries can be reduced to structural pre-computations and then solving a reachability query.
Based on these ideas, \cite{DBLP:conf/atva/BrazdilCCFKKPU14} explains how the BRTDP algorithm can be adapted to accommodate for such an LTL query.

However, several difficulties arise in the uncountable setting.
A central part of the finite-state algorithm is identifying (winning) end components of the \emph{product MDP} (see, e.g., \cite[Sec.~10.3, Sec.~10.6.4]{DBLP:books/daglib/0020348}, for more details on this \emph{automata-theoretic approach} \cite{DBLP:journals/jcss/VardiW86}), or equivalently solving the \emph{repeated reachability} problem.
In particular, to prove \emph{safety} conditions, for example \enquote{remain inside a region forever} ($\ltlGlobally \texttt{in\_region}$), one cannot use sampling alone in general, since even for a single, finite trace it is impossible to give a positive judgement for such an \enquote{infinite} horizon property.
Instead, one needs to analyse the system's transition function to infer knowledge about the infinite horizon behaviour, which is difficult even for restricted classes of uncountable systems, and impossible in our case, since we treat the transition function as a black box.
See \cite[Sec.~4]{DBLP:journals/iandc/TkachevMKA17} for further discussion.

Another problem arises already on very simple, \enquote{smooth} systems with equally simple properties, which we illustrate in the following.
Consider an MDP, where $\States = [-2, 2]$, $\stateactions(s) = [-1, 1]$, and $\mdptransitions(s, a) = \mathrm{unif}([a - 1, a + 1])$, i.e.\ uniformly distributed around the location chosen through $a$.
Furthermore, assume that the goal specified by the LTL formula is to remain in the area $[-1, 1]$ forever.
Note that this query is \enquote{stateless}, it is a simple safety requirement.
%Arguably, both the system and the formula can not get much simpler.
It is easy to see that by playing action $0$ from every state we satisfy the goal with probability $1$.
Any other strategy which encounters other actions repeatedly yields an almost sure loss, i.e.\ the probability of satisfying the goal is $0$.
This particularly shows that obtaining the correct action by sampling has probability $0$, even though a sampled path following an unsafe strategy may remain inside the safe area for a very long time.
Moreover, the value function is not continuous, namely $\val(s) = \indicator{[-1, 1]}(s)$, since there is a surely winning strategy for any state in $[-1, 1]$.
Interestingly, the state-action value function $\val(s, a)$ is Lipschitz continuous, namely $\val(s, a) = \frac{1}{2} \min \{2 - a, a + 2\}$ for all $s \in [-1, 1]$.
For example, we have that $\val(0, -1) = 0.5$, since by playing $-1$ we only end up in the \enquote{bad} region with $\frac{1}{2}$ probability, otherwise we can recover by playing optimally.

\subsection{Reach-avoid problems} \label{app:extensions:ltl:reach-avoid}

Despite that it may seem quite difficult to solve this problem in general even on simple systems, we actually can apply our approach to so called reach-avoid problems.
These include, for example, a robot navigating towards a recharge station while avoiding dangerous terrain.
More formally, on top of a reachability query we assume to be given a (measurable and decidable) region to be avoided $U \subseteq \States$.

As in \cref{sec:cbrtdp}, we make our usual assumptions, only that we do not require Lipschitz continuity on the whole state space.
Instead, using the ideas of \cref{app:extensions:discontinuity}, it is sufficient to assume Lipschitz continuity of the value function on $\States \setminus (\targetset \union U)$.
Note that we still require the \enquote{sink}-assumptions.
For simplicity, assume that $\ProbabilityMDP<\MDP, s><\strategy>[\reach (\targetset \union U)] = 1$ for all strategies $\strategy$ and states $s$.
This means that eventually the system either has to reach the target or will fall into an unrecoverable \enquote{error} state, for example running out of energy.
In this case, our methods are directly applicable without any major modifications.
Note that the central idea is that (i)~we can judge whether a path succeeds or fails based on a finite prefix and (ii)~such a success or failure occurs with probability 1 under any strategy.
We conjecture that our approach is applicable to any system-LTL pair which satisfies this criterion.

\clearpage
\section{Evaluation} \label{app:experiments}

We implemented a prototype in Java and evaluated it on two models, which navigate inside one- and two-dimensional state space, respectively.
Our $\getpair$ uses a mixture of global random sampling and path sampling.
To reduce the implementation complexity, our prototype only supports finite action space and assumes that all states have all actions available.
Moreover, currently the implementation only supports uniform or discrete distributions, however it easily can be extended to support further types.
Under- and over- approximation is implemented by a (cached) discretized representation of the lower and upper bound functions.
However, adding uncountable actions is not too different from additional space dimensions.
The experiments were carried out on consumer grade hardware (2.60GHz Intel~i7-9750H CPU, 32 GB RAM).

\begin{figure*}[t]
	\centering
	\includegraphics[width=0.85\textwidth]{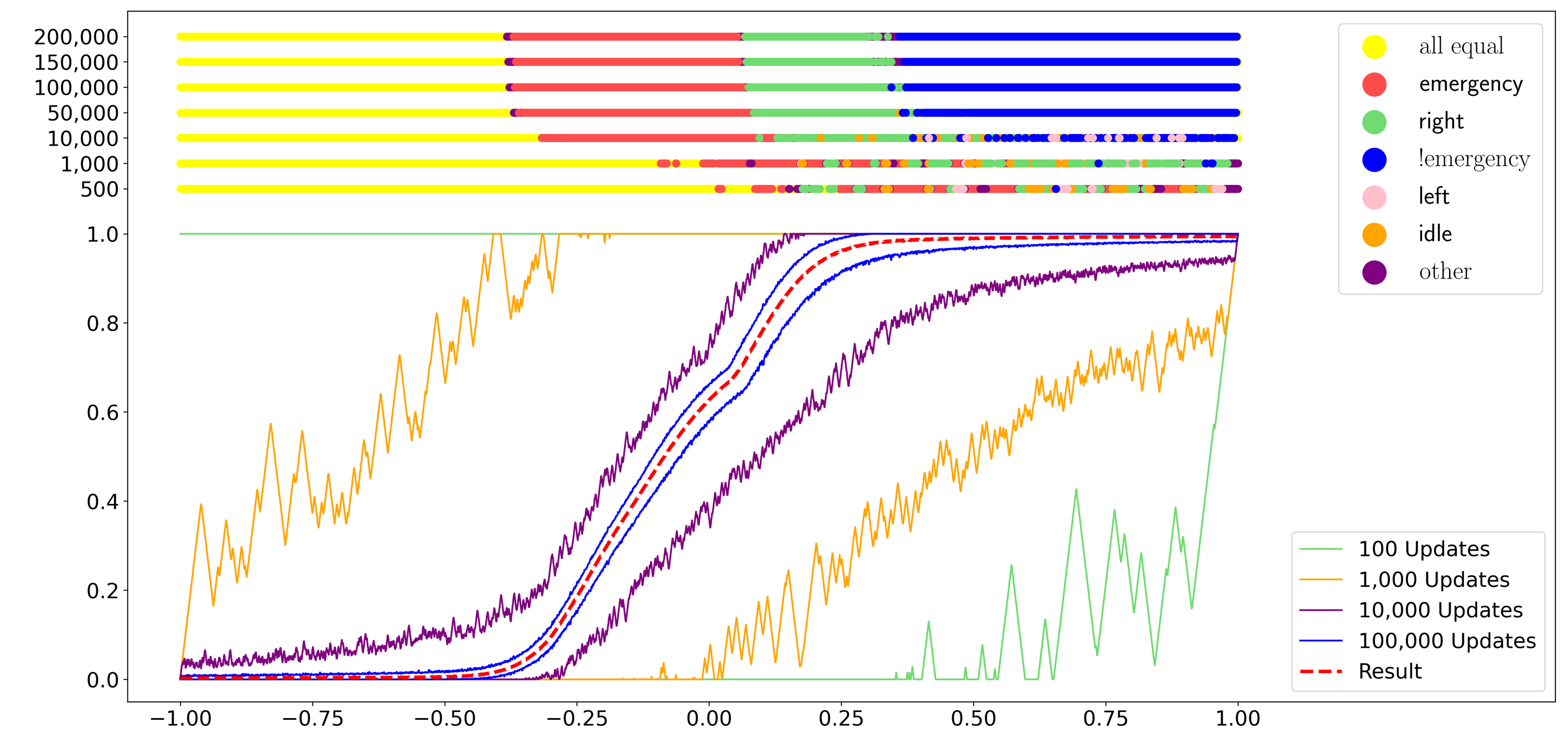}
	\caption{
		Experimental results on the one-dimensional model.
		The bottom part shows the development of the computed upper and lower bounds over the course of the algorithm.
		Furthermore, we show the final result (dashed) as the average of the computed bounds.
		The top part shows the development of the optimal action over time.
		\enquote{all equal} denotes that all actions are optimal, \enquote{!emergency} that all actions except $\textsf{emergency}$ are optimal, and \enquote{other} refers to any other combination of optimal actions not explicitly listed.
	}
	\label{fig:1dimExp}
\end{figure*}

The first model is an MDP over $\States = [-1, 1]$ with $\Actions = \{\textsf{idle}, \textsf{left}, \textsf{right}, \textsf{emergency}\}$.
The target is $\targetset = \{1\}$ and the sink is $\sinkset = \{-1\}$.
Essentially, in each state the system is affected by (i)~its own chosen \enquote{thrust} and (ii)~the gravitational forces of the sinks.
The latter is computed according to textbook physics.
For the former, the three actions $\textsf{idle}$, $\textsf{left}$, and $\textsf{right}$ yield a small force in the respective direction.
The $\textsf{emergency}$ action gives a significantly higher acceleration towards the right side, however also comes with the danger of exploding with a probability of $20\%$, realized by moving to state $-1$ with said probability.
The outcome of each action furthermore is randomized over a continuous interval.
Computation terminates after $236{,}000$ updates, requiring 9 secs and ~0.5~GB RAM to achieve the required precision of $\varepsilon = 0.05$ in the initial state $0$.

\Cref{fig:1dimExp} summarizes the results.
When inspecting the optimal actions, four regions emerge:
States close to $-1$ have practically no chance of escaping.
Thus, all actions have a similar value (yellow), close to zero.
For the states around $0$, the most promising action is the $\textsf{emergency}$ thrust (red) to quickly get outside the gravitational pull of the sink.
When slightly above $0$, just going $\textsf{right}$ (green) without the risk of explosion is safer.
In particular, we identify the trade-off point between $\textsf{emergency}$ and $\textsf{right}$ at approximately $0.05$, where the bounds exhibit a sharp bend.
Lastly, when already close to $1$, all actions except the $\textsf{emergency}$ are optimal (blue), since we still have to avoid the emergency thrust to avoid exploding.
This includes the $\textsf{left}$ action, since the thrust of this action is less than the gravitational pull of the target.

\begin{figure*}[t]
	\centering
	\subfloat[][Value Function]{\includegraphics[width=0.4\textwidth]{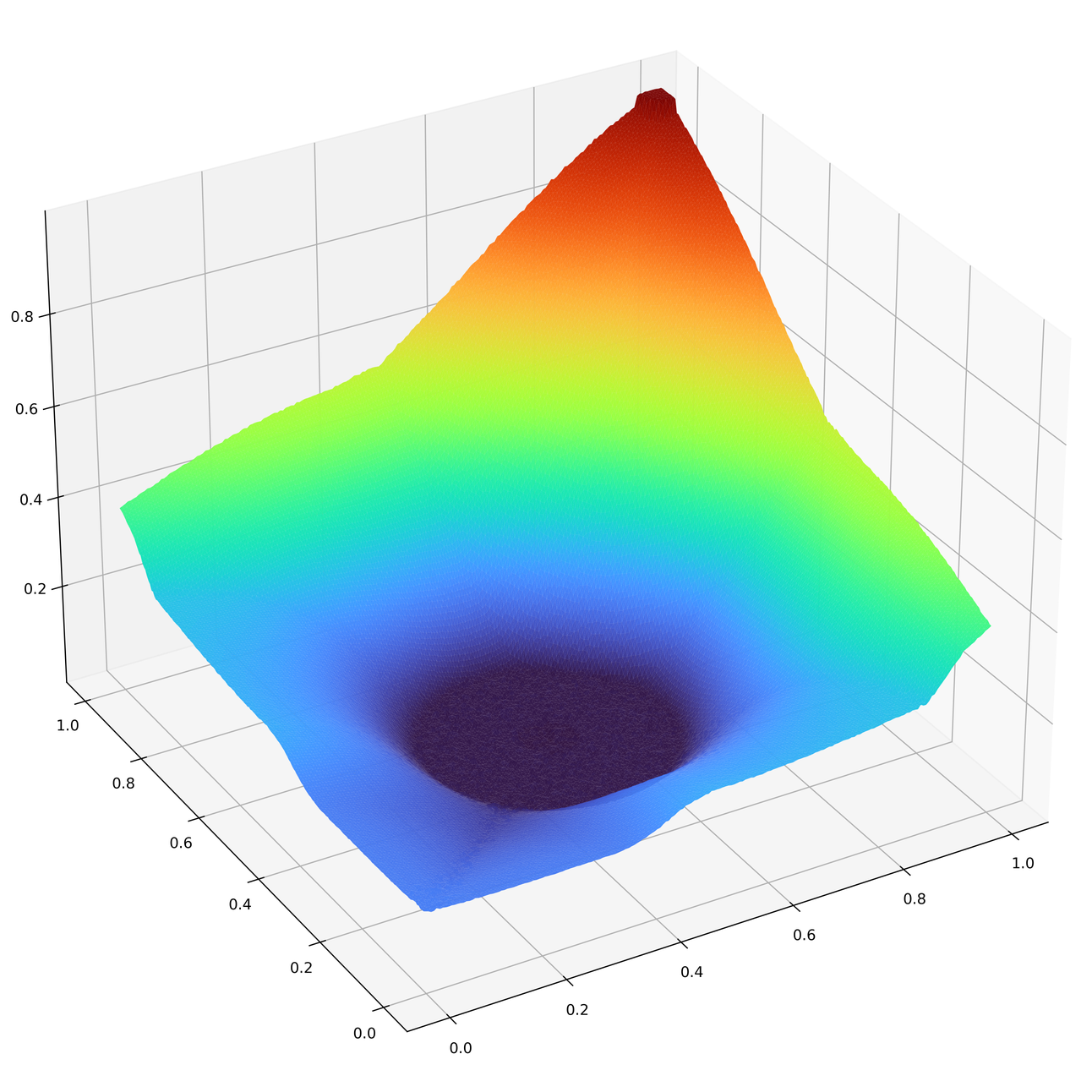}}
	\hfill
	\subfloat[][Optimal actions]{\includegraphics[width=0.47\textwidth]{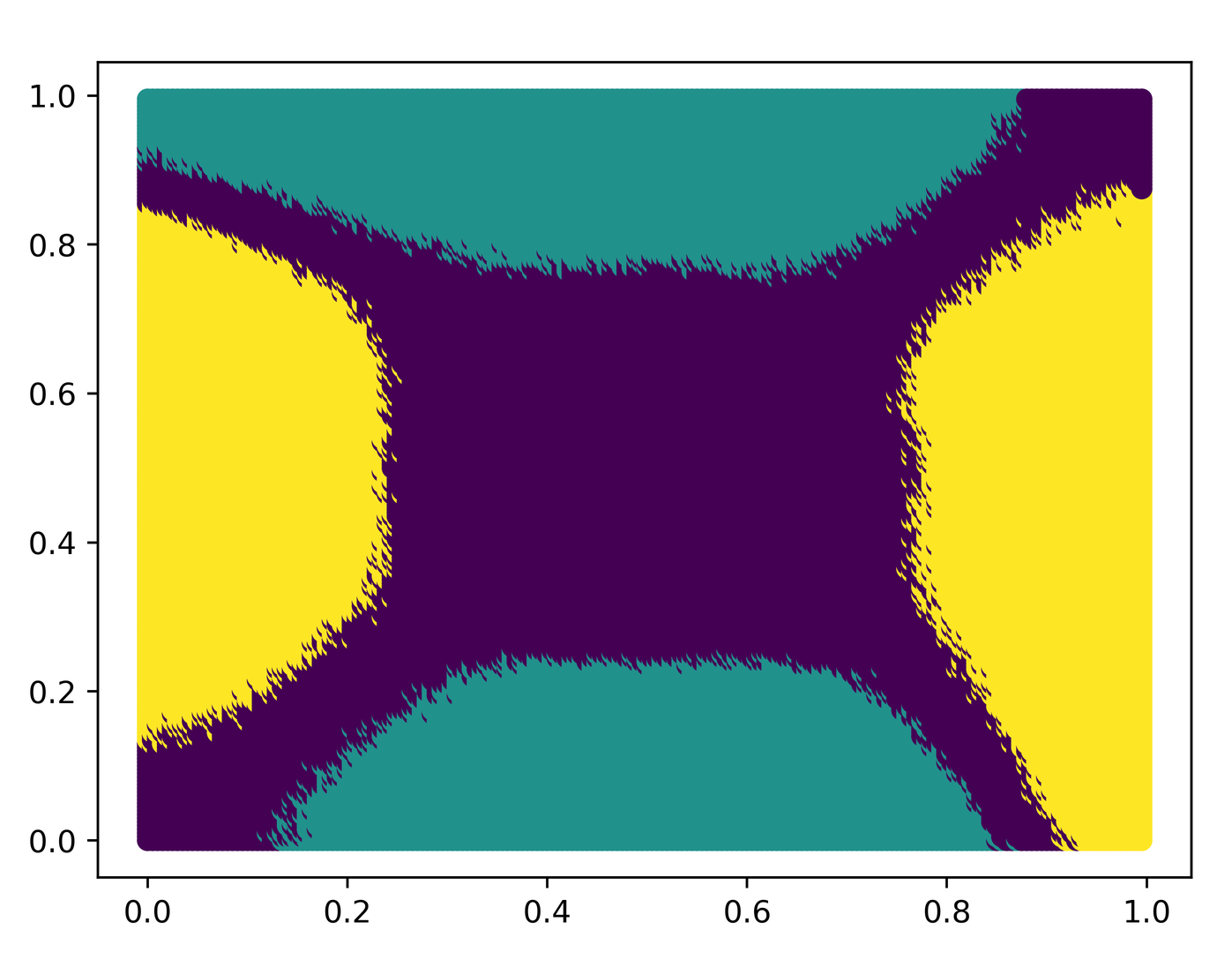}}
	\caption{
		Experimental results on the two-dimensional model.
		On the left, we show the computed value (average of lower and upper bounds).
		On the right, we depict the optimal actions, where yellow corresponds to $\textsf{north}$, green to $\textsf{east}$, and purple means that both actions are roughly equivalent.
	}
	\label{fig:2dimExp}
\end{figure*}

The second model is conceptually similar to the first one:
The system navigates the state space $\States = [0, 1]^2$ using the actions $\Actions = \{\textsf{north}, \textsf{east}\}$ with the expected outcomes.
The target set is $\targetset = \{x \mid \norm{x - (1,1)}_2 < 0.05\}$ with the sink at $\sinkset = \{x \mid \norm{x - (0.5, 0.5)}_2 < 0.05\}$.\footnote{We used regions instead of single points to simplify the implementation of the models.}
Our implementation takes 620 seconds, roughly 3 GB, and $605{,}000$ updates to converge to a precision of $\varepsilon = 0.1$.

\Cref{fig:2dimExp} shows the results for the two-dimensional model.
The output is as expected:
Close to the sink, the value is low, as there is little chance of escaping, and both actions achieve the same values; analogous for the target.
However, in the region around the sink, it is important to choose the right action in order to navigate around the sink as quickly as possible, see \cref{fig:2dimExp}b).

\clearpage
\section{Technical Proofs} \label{app:proofs}

\paragraph*{Definition of Support for General Measures}
As mentioned in the preliminaries, $\support(\mu)$ intuitively describes the \enquote{smallest} set which $\mu$ assigns a value of $1$.
However, consider, for example, the uniform distribution $\lambda$ over the interval $[0, 1]$.
Here, we could always remove single points or even any countable set (which have a measure of $0$ under $\lambda$) and $\lambda$ still assigns a value of $1$ to the remaining set.
Consequently, the support of $\lambda$ would not be well defined, since from any supposedly \enquote{smallest} set, we could again remove a single point and obtain an even smaller set.
For a well-defined notion, let $(X, \mathcal{T})$ be a topological space and $\sigmaalgebra_X = \mathfrak{B}(\mathcal{T})$ the Borel $\sigma$-algebra.
Then, we define $\support(\mu) = \{x \in X \mid x \in N \in \mathcal{T} \Rightarrow \measure(N) > 0\}$, i.e.\ the set of points for which any open neighbourhood $N$ has positive measure.
With this definition, we get $\support(\lambda) = [0, 1]$, as expected.

\paragraph*{Iteration Counter}
In our proofs, we refer to the specific iteration steps of the algorithm and the values of each variable at these steps.
Since every sampled state-action pair is an atomic step of the algorithm, we use the iteration counter $\algostep$ as index of $s$, $a$, $\lowerbound$ and $\upperbound$ to refer to these values.
For example, $s_\algostep$ denotes the state sampled in the $\algostep$-th step.
To ease notation, we assume that any variable which is not changed in step $\algostep$ retains its value.
For example, any lower bound $\lowerboundstored_\algostep(s, a)$ of all state-action action pairs $(s, a)$ which have not been updated in step $\algostep$ still have the same value in step $\algostep + 1$.
We keep this convention throughout all the proofs in \cref{app:proofs}.

\subsection{Detailed Proof of \texorpdfstring{\cref{stm:cvi_correct}}{Theorem 4}} \label{app:proofs:lvi}

We first provide a proof sketch, illustrating the main steps of the proof.
The sketch particularly omits several subtle technicalities.
For example, the precision of values obtained by $\underapprox$ can be non-monotonic and approximation errors might accumulate.
These issues are treated in the main proof.
\begin{proof}[Proof sketch]
	First, we show that $\lowerbound_{\algostep}(s) \leq \lowerbound_{\algostep + 1}(s) \leq \val(s)$ by simple induction on the step.
	Initially, we have $\lowerbound_1(s) = 0$, obviously satisfying the condition.
	The updates in Lines~\ref{line:alg:cvi:target} and \ref{line:alg:cvi:update} both keep correctness, i.e.\ $\lowerbound_{\algostep + 1}(s) \leq \val(s)$, proving the claim.
	
	Since $\lowerbound_\algostep$ is monotone as argued above, its limit for $\algostep \to \infty$ is well defined, denoted by $\lowerbound_\infty$.
	By \textbf{State-Action Sampling}, the set of accumulation points of $s_\algostep$ contains all reachable states $\States^{\reach}$.
	We then prove that $\lowerbound_\infty$ satisfies the fixed point equation \cref{eq:value_fixpoint}.
	For this, we use the second part of the assumption on $\getpair$, namely that for every $(s, a) \in \States^{\reach} \times \stateactions$ we get a converging subsequence $(s_{\algostep_k}, a_{\algostep_k})$ where additionally $\mdptransitions(s_{\algostep_k}, a_{\algostep_k})$ converges to $\mdptransitions(s, a)$ in total variation.
	Intuitively, since infinitely many updates occur infinitely close to $(s, a)$, its limit lower bound $\lowerbound_\infty(s, a)$ agrees with the limit of the updates values $\lim_{k \to \infty} \ExpectedSumMDP{\mdptransitions}{s_{\algostep_k}}{a_{\algostep_k}}{\lowerbound_{\algostep_k}}$.
	Since $\lowerbound_\infty$ satisfies the fixed point equation and is less or equal to the value function $\val$, we get the result, since $\val$ is the smallest fixed point.
\end{proof}

Before we begin with the technical proof of the correctness theorem, we point out a subtle issue in the update computation of the algorithm.
In the proof, we want to show that the lower bounds are monotonically increasing, i.e.\ $\lowerbound_{\algostep}(s, a) \leq \lowerbound_{\algostep + 1}(s, a)$ for any state-action pair $(s, a) \in \States \times \stateactions$.
To obtain this monotonicity in the presence of under-approximations, we need to slightly modify the algorithm in a special case.
In particular, suppose that we sampled some state action pair $(s, a)$ in step $\algostep$ and its lower bound $\lowerboundstored_{\algostep + 1}$ was approximated very precisely.
If we now again sample $(s, a)$ exactly in $\algostep + 1$, the approximation computation may yield a coarser result, consequently we would have $\lowerbound_{\algostep + 2}(s, a) < \lowerbound_{\algostep + 1}$.
To remedy this situation, we modify the algorithm to update $\lowerboundstored$ with the maximum of the current and the computed value.
Note that if we sample a nearby state-action pair $(s', a')$ in the second step, we do not need this special treatment, since then we would simply have that $\lowerboundstored_{\algostep + 2}(s', a') < \lowerbound_{\algostep + 2}(s', a')$, since $\lowerbound_{\algostep + 2}(s', a')$ is then computed based on the stored value of the nearby pair $(s, a)$.
\begin{proof}
	We show that (i)~$\lowerbound_\algostep(s) \leq \val(s)$ for all $s \in \States$ and steps $\algostep$ and (ii)~$\lim_{\algostep \to \infty} \lowerbound_\algostep(\initialstate) = \val(\initialstate)$.

	First, we prove that $\lowerbound_{\algostep}(s) \leq \lowerbound_{\algostep + 1}(s) \leq \val(s)$ for all $s \in \States$ and steps $\algostep$ by induction on $\algostep$.
	Initially, we only need to show that $\lowerbound_1(s, a) \leq \val(s, a)$.
	This clearly is the case since we have $\lowerbound_1(s, a) = 0$, since $\sampled$ is empty and by convention we then set the lower bound to 0.
	Assume we have $\lowerbound_\algostep(s) \leq \val(s)$ for some step $\algostep$.
	The update in Line~\ref{line:alg:cvi:target} is obviously correct, since $\val(s) = 1$ for all $s \in \targetset$.
	Note that in this case we trivially get monotonicity of the updates.
	For the back-propagation in Line~\ref{line:alg:cvi:update}, observe that $\ExpectedSumMDP{\mdptransitions}{s}{a}{\lowerbound_\algostep} \leq \ExpectedSumMDP{\mdptransitions}{s}{a}{\val}$ by induction.
	Moreover, $\val(s, a) = \ExpectedSumMDP{\mdptransitions}{s}{a}{\val}$ by definition.
	Hence $\lowerboundstored_{\algostep + 1}(s_\algostep, a_\algostep) \leq \val(s, a)$.
	Thus, correctness is preserved.
	Monotonicity is directly obtained due to the above discussion.

	Now, we prove that if the algorithm does not terminate we have that $\lim_{\algostep \to \infty} \lowerbound_\algostep(\initialstate) = \val(\initialstate)$.
	Note that when the algorithm does terminate, there is nothing left to prove.
	In the following, we only argue using limit behaviour.
	Hence, because (i) our definition of $\getprecision$ ensures that it converges to 0 in the limit and (ii) by Assumptions \textbf{A2} and \textbf{A3} we can approximate arbitrarily precisely, we have that any arising approximation computation is arbitrarily precise (this in particular means that accumulation of approximation bounds is not an issue).
	We first set up some auxiliary notation.
	Set $\lowerbound_\infty(s, a) = \lim_{\algostep \to \infty} \lowerbound_\algostep(s, a)$, $\lowerbound_\infty(s) = \max_{a \in \stateactions(s)} \lowerbound_\infty(s, a)$ (note that both $\lowerbound_\infty$ are continuous).
	These limits are well-defined due to the above result -- the functions are bounded and monotone.
	Moreover, since the set of state-action pairs is compact, the convergence of $\lowerbound_\algostep$ is uniform.
	Let further $\States_\infty = \{s \mid \forall \varepsilon > 0.\ \forall \algostep.\ \exists \algostep' > \algostep.\ \metricstates(s_{\algostep'}, s) < \varepsilon\}$ the set of all accumulation points of $s_\algostep$, i.e.\ all states to which the algorithm gets arbitrarily close infinitely often.
	By compactness of $\States$, this set is not empty.
	Note that despite the set of all sampled states being countable, $\States_\infty$ may be uncountable.
	Next, for each $s \in \States_\infty$, set $\Actions_\infty(s) = \{a \in \stateactions(s) \mid \forall \varepsilon > 0.\ \forall \algostep.\ \exists \algostep' > \algostep.\ \metricactions(a, a_{\algostep'}) < \varepsilon\}$.
	By our assumption \textbf{State-Action Sampling}, we have that (i)~$\States_\infty$ contains all reachable states, i.e.\ $\States^{\reach} \subseteq \States_\infty$, and (ii)~$\Actions_\infty(s) = \stateactions(s)$ for all $s \in \States_\infty$.
	We now prove that for all $s \in \States_\infty$ we have that either $s \in \targetset$ and $\lowerbound_\infty(s) = 1$ or $\lowerbound_\infty(s, a) = \ExpectedSumMDP{\mdptransitions}{s}{a}{\lowerbound_\infty}$.
	Note that the first case is trivial by the update rule of the algorithm.
	Let $s \in \States^{\reach} \setminus \targetset$ arbitrary and $a \in \stateactions(s)$.
	From our assumption, we obtain a sequence of state-action pairs $(s_{\algostep_k}, a_{\algostep_k})$ which converges to $(s, a)$ and $\mdptransitions(s_{\algostep_k}, a_{\algostep_k})$ converges in total variation to $\mdptransitions(s, a)$.
	By definition of the algorithm, we have that $\lowerbound_{\algostep_k + 1}(s_{\algostep_k}, a_{\algostep_k}) =  \ExpectedSumMDP{\mdptransitions}{s_{\algostep_k}}{a_{\algostep_k}}{\lowerbound_{\algostep_k}}$.
	By uniform convergence and continuity of $\lowerbound_\algostep$, we get $\lim_{k \to \infty} \lowerbound_{\algostep_k + 1}(s_{\algostep_k}, a_{\algostep_k}) = \lim_{k \to \infty} \lowerbound_\infty(s_{\algostep_k}, a_{\algostep_k}) = \lowerbound_\infty(s, a)$.
	By total variation convergence, we obtain that $\lim_{k \to \infty} \ExpectedSumMDP{\mdptransitions}{s_{\algostep_k}}{a_{\algostep_k}}{\lowerbound_\infty} = \ExpectedSumMDP{\mdptransitions}{s}{a}{\lowerbound_\infty}$.
	The desired claim follows.

	To conclude, observe that $\val$ is the least fixed point of the equation system in \cref{eq:value_fixpoint}.
	Since we have shown that $\lowerbound_\infty(s) \leq \val(s)$ for all $s \in \States^{\reach}$ and that $\States^{\reach} \subseteq \States^{\reach}_*$, $\lowerbound_\infty$ equals the least fixed point on these states.
\end{proof}

\clearpage

\subsection{Proof of \texorpdfstring{\cref{stm:cbrtdp_correct}}{Theorem 5}} \label{app:proofs:cbrtdp}

In the algorithm, the $f_n$ of Assumption \textbf{B.BRTDP} corresponds to the computed upper bounds at step $n$ and we write $\States^{\reach}_+$ in the following to denote this set of states generated by the algorithm to simplify notation.
\begin{proof}[Proof sketch]
	We again obtain monotonicity of the bounds, i.e.\ $\lowerbound_\algostep(s, a) \leq \lowerbound_{\algostep + 1}(s, a) \leq \val(s, a) \leq \upperbound_{\algostep + 1}(s, a) \leq \upperbound_\algostep(s, a)$ by induction on $\algostep$, using completely analogous arguments.
	
	By monotonicity, we also obtain well defined limits $\upperbound_\infty$ and $\lowerbound_\infty$.
	Further, we define the difference function $\bounddifference_\algostep(s, a) = \upperbound_\algostep(s, a) - \lowerbound_\algostep(s, a)$ together with its state based counterpart $\bounddifference_\algostep(s)$ and its limit $\bounddifference_\infty(s)$.
	We show that $\bounddifference_\infty(\initialstate) = 0$, proving convergence.
	To this end, similar to the previous proof, we prove that $\bounddifference_\infty$ satisfies a fixed point equation on $\States^{\reach}_+$ (see \textbf{B.BRTDP}), namely $\bounddifference_\infty(s) = \ExpectedSumMDP{\mdptransitions}{s}{a(s)}{\bounddifference_\infty}$ where $a(s)$ is a specially chosen \enquote{optimal} action for each state satisfying $\bounddifference_\infty(s, a(s)) = \bounddifference_\infty(s)$.
	Now, set $\bounddifference_* = \max_{s \in \States^{\reach}_+} \bounddifference_\infty(s)$ the maximal difference on $\States^{\reach}_+$ and let $\States^{\reach}_*$ be the set of witnesses obtaining $\bounddifference_*$.
	Then, $\mdptransitions(s, a(s), \States^{\reach}_*) = 1$: If a part of the transition's probability mass would move to a region with smaller difference, an appropriate update of a pair close to $(s, a(s))$ would reduce its difference.
	Hence, the set of states $\States^{\reach}_*$ is a \enquote{stable} subset of the system when following the actions $a(s)$.
	By \textbf{Absorption}, we eventually have to reach either the target $\targetset$ or the sink $\sinkset$ starting from any state in $\States^{\reach}_*$.
	Since $\bounddifference_\infty(s) = 0$ for all (sampled) states in $\targetset \union \sinkset$ and $\bounddifference_\infty$ satisfies the fixed point equation, we get that $\bounddifference_\infty(s) = 0$ for all states $\States^{\reach}_*$ and consequently $\bounddifference_\infty(\initialstate) = 0$.
\end{proof}

Before the full technical proof, we first prove a small auxiliary lemma, showing that the upper and lower bounds are monotone and sound, i.e.\ $\lowerbound$ is increasing, $\upperbound$ is decreasing, and the value always lies between them.
Note that we again use the same adaptations as discussed in \cref{app:proofs:lvi} in order to ensure monotonicity.
\begin{lemma} \label{stm:bounds_order}
	For any step $\algostep$, state $s$, and action $a \in \stateactions(s)$, we have that $\lowerbound_\algostep(s, a) \leq \lowerbound_{\algostep + 1}(s, a)$, $\upperbound_{\algostep + 1}(s, a) \leq \upperbound_\algostep(s, a)$, and $\lowerbound_\algostep(s, a) \leq \val(s, a) \leq \upperbound_\algostep(s, a)$.
\end{lemma}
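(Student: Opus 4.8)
The plan is to prove the three claims simultaneously by induction on the step counter $\algostep$, following the first half of the proof of \cref{stm:cvi_correct} and just adding the symmetric treatment of the upper bounds. In the base case $\algostep = 1$ the set $\sampled$ is empty, so by our conventions $\lowerbound_1(s,a) = 0$ and $\upperbound_1(s,a) = 1$ for every pair; since $\val(s,a) \in [0,1]$ this gives soundness, and there is nothing to check for monotonicity before the first update. As in the remark preceding the proof of \cref{stm:cvi_correct}, I use the adapted updates that store the \emph{maximum} of the freshly computed value and the current extrapolated value for $\lowerboundstored$ (and the \emph{minimum} for $\upperboundstored$), which is what makes the stored tables monotone even though $\underapprox$ and $\overapprox$ may return coarser values on a resampled pair.

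The key auxiliary observation I would isolate first is that soundness of the \emph{stored} values propagates to soundness of the \emph{extrapolated} functions via Assumption~\textbf{C}. Concretely, if $\lowerboundstored_\algostep(s',a') \le \val(s',a')$ for all $(s',a') \in \sampled$, then for any pair $(s,a)$ and any sampled $(s',a')$ Lipschitz continuity of $\val(\cdot,\cdot)$ gives $\val(s,a) \ge \val(s',a') - \lipschitzproduct \cdot \metricproduct((s,a),(s',a')) \ge \lowerboundstored_\algostep(s',a') - \lipschitzproduct \cdot \metricproduct((s,a),(s',a'))$; taking the maximum over $\sampled$ (and recalling the clamping to $[0,1]$ from \cref{eq:lowerbound_state_action}) yields $\lowerbound_\algostep(s,a) \le \val(s,a)$. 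The mirror argument with the minimum shows $\val(s,a) \le \upperbound_\algostep(s,a)$ from soundness of $\upperboundstored_\algostep$. Hence it suffices to track soundness and monotonicity of the stored tables entry-wise, since all three inequalities in the statement then follow.

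For the inductive step I would go through the three branches of the loop body of \cref{alg:cbrtdp}. If $s_\algostep \in \targetset$, the update sets $\lowerboundstored(s_\algostep, \cdot)$ to $1$, which is sound because a path through a target state satisfies $\reach\targetset$, and it does not decrease the entry; $\upperboundstored$ is untouched. If $s_\algostep \in \sinkset$, then $\val(s_\algostep, \cdot) = 0$ by Assumption~\textbf{D}, so setting $\upperboundstored(s_\algostep, \cdot)$ to $0$ is sound and non-increasing, and $\lowerboundstored$ is untouched. In the remaining branch we back-propagate; using the induction hypothesis $\lowerbound_\algostep \le \val \le \upperbound_\algostep$ pointwise and monotonicity of integration,
\[
	\underapprox(\ExpectedSumMDP{\mdptransitions}{s_\algostep}{a_\algostep}{\lowerbound_\algostep}, \getprecision) \le \ExpectedSumMDP{\mdptransitions}{s_\algostep}{a_\algostep}{\lowerbound_\algostep} \le \ExpectedSumMDP{\mdptransitions}{s_\algostep}{a_\algostep}{\val} = \val(s_\algostep, a_\algostep),
\]
and the symmetric chain with $\overapprox$ and $\upperbound_\algostep$ gives $\val(s_\algostep, a_\algostep) \le \overapprox(\ExpectedSumMDP{\mdptransitions}{s_\algostep}{a_\algostep}{\upperbound_\algostep}, \getprecision)$, so the (max/min-clamped) stored entries remain sound; all other entries are unchanged and hence inherit soundness from the induction hypothesis. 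Monotonicity in this branch is immediate from the explicit max/min in the adapted update, and propagating through the max/min extension yields $\lowerbound_\algostep \le \lowerbound_{\algostep+1}$ and $\upperbound_{\algostep+1} \le \upperbound_\algostep$ pointwise.

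The main obstacle is not the inductive bookkeeping but the measure-theoretic admissibility that the argument silently relies on: one must check that each extrapolated $\lowerbound_\algostep$, $\upperbound_\algostep$ is $\lipschitzproduct$-Lipschitz and $[0,1]$-valued (clear from its definition as a finite clamped max/min of Lipschitz terms), hence continuous and Borel measurable, so that $\ExpectedSumMDP{\mdptransitions}{s}{a}{\lowerbound_\algostep}$ is well defined and Assumptions~\textbf{A3}/\textbf{A6} genuinely apply to it, and that these functions are themselves approximable to arbitrary precision as required by the oracle inputs. Apart from spelling out these routine technicalities — which are exactly the ones deferred in the proof of \cref{stm:cvi_correct} — the lemma is a direct induction, and it is then reused as the monotonicity/soundness part of the proof of \cref{stm:cbrtdp_correct}.
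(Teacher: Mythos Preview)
Your proposal is correct and follows essentially the same approach as the paper's own proof: induction on $\algostep$, base case via the empty $\sampled$ conventions, and a case split over the three branches of the loop body, invoking the max/min adaptation (which the paper also explicitly mentions just before the lemma) for monotonicity. Your write-up is in fact more explicit than the paper's in two places---you spell out how Lipschitz continuity of $\val$ transfers soundness from $\lowerboundstored,\upperboundstored$ to the extrapolated $\lowerbound,\upperbound$, and you flag the measurability and approximability prerequisites for invoking \textbf{A3}/\textbf{A6}---but these are elaborations of the same argument rather than a different route.
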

\begin{proof}
	We prove by induction on the step $\algostep$.
	Initially, we only need to show that $\lowerbound_1(s, a) \leq \val(s, a) \leq \upperbound_1(s, a)$.
	This clearly is the case since we have $\lowerbound_1(s, a) = 0$ and $\upperbound_1(s, a) = 1$.

	Now, assume we have $\lowerbound_\algostep(s) \leq \val(s) \leq \upperbound_\algostep(s)$ for some step $\algostep$.
	The update in Line~\ref{line:alg:cbrtdp:target} is obviously correct, since $\val(s) = 1$ for all $s \in \targetset$.
	The correctness of Line~\ref{line:alg:cbrtdp:sink} follows directly from our assumption on $\sinkset$ ($\val(s) = 0$ for all $s \in \sinkset$).
	Note that in these two cases we trivially get monotonicity of the updates.
	For the back-propagation in Lines~\ref{line:alg:cbrtdp:update_u} and \ref{line:alg:cbrtdp:update_l}, observe that $\ExpectedSumMDP{\mdptransitions}{s}{a}{\lowerbound_\algostep} \leq \ExpectedSumMDP{\mdptransitions}{s}{a}{\val} \leq \ExpectedSumMDP{\mdptransitions}{s}{a}{\upperbound_\algostep}$ by induction.
	Moreover, $\val(s, a) = \ExpectedSumMDP{\mdptransitions}{s}{a}{\val}$ by definition.
	Hence $\lowerboundstored_{\algostep + 1}(s_\algostep, a_\algostep) \leq \val(s, a) \leq \upperboundstored_{\algostep + 1}(s_\algostep, a_\algostep)$.
	Thus, correctness is preserved.
\end{proof}
With this lemma, we can now prove correctness and termination of our algorithm.
Correctness follows directly from the above lemma.
In order to prove termination, we essentially construct a contradiction based on the \textbf{Absorption} assumption.
\begin{proof}[Proof of \cref{stm:cbrtdp_correct}]
	\textbf{Correctness:}
	Follows directly from \cref{stm:bounds_order}.
	In particular, when the algorithm terminates, we know that $\lowerbound_\algostep(\initialstate) \leq \val(\initialstate) \leq \upperbound_\algostep(\initialstate)$.

	\newcommand{\maxU}{\mathsf{MaxA}}

	\textbf{Termination:}
	We prove by contradiction.
	Thus, assume that the algorithm does not converge, i.e.\ we have $\upperbound_\algostep(\initialstate) - \lowerbound_\algostep(\initialstate) > \varepsilon$ for all steps $\algostep$.
	As before, we only argue using limit behaviour, and again assume that any arising approximation computation is arbitrarily precise.
	First, we need to set up some auxiliary notation.
	Set $\upperbound_\infty(s, a) = \lim_{\algostep \to \infty} \upperbound_\algostep(s, a)$, $\upperbound_\infty(s) = \max_{a \in \stateactions(s)} \upperbound_\infty(s, a)$ (note that both $\upperbound_\infty$ are continuous), and analogously define $\lowerbound_\infty$.
	All limits are well-defined due to \cref{stm:bounds_order} -- the functions are bounded and monotone.
	Moreover, the convergence is uniform due to pointwise convergence on a compact domain.
	Now, define the difference function $\bounddifference_\algostep(s, a) = \upperbound_\algostep(s, a) - \lowerbound_\algostep(s, a)$.
	For any step $\algostep$ and state $s$, let $\maxU_\algostep(s) := \argmax_{a \in \stateactions(s)} \upperbound_\algostep(s, a)$ denote the set of $\upperbound$-optimal actions in state $s$ and set $\bounddifference_\algostep(s) = \max_{a \in \maxU_\algostep(s)} \bounddifference_\algostep(s, a)$ the maximal difference among them.
	It is important to note that $\bounddifference_\algostep(s)$ does not necessarily equal $\max_{a \in \stateactions(s)} \bounddifference_\algostep(s, a)$, but it is easy to show that $\bounddifference_\algostep(s) \geq \upperbound_\algostep(s) - \lowerbound_\algostep(s)$.
	Clearly, $\bounddifference_\infty(s) = \limsup_{\algostep \to \infty} \bounddifference_\algostep(s)$ is well defined, too, since $\bounddifference_\algostep$ is bounded.
	Now, our overall proof strategy is to prove that $\bounddifference_\infty(\initialstate) = 0$ almost surely, since this implies that eventually $\upperbound_\algostep(\initialstate) - \lowerbound_\algostep(\initialstate) \leq \bounddifference_\algostep(s) < \varepsilon$, contradicting our initial assumption.

	Let now $\maxU_\infty(s) = \{a \in \stateactions(s) \mid \forall \varepsilon > 0.\ \forall \algostep.\ \exists \algostep' > \algostep.\ \upperbound_{\algostep'}(s) - \upperbound_{\algostep'}(s, a) < \varepsilon\}$ all actions which infinitely often achieve a value arbitrarily close to the optimum.
	We now show that for any fixed state $s$ there exists an action $a_\infty^{\max}(s) \in \maxU_\infty(s) \subseteq \stateactions(s)$ such that $\bounddifference_\infty(s) = \bounddifference_\infty(s, a)$.
	For every $\algostep$, choose an arbitrary action $a_\algostep^{\max} \in \argmax_{a \in \maxU_\algostep(s)} \bounddifference_\algostep(s, a)$.
	We have that $\upperbound_\algostep(s, a_\algostep^{\max}) = \upperbound_\algostep(s)$ by definition.
	Consequently, the limit $\lim_{\algostep \to \infty} \upperbound_\algostep(s, a_\algostep^{\max})$ is well defined and equals $\upperbound_\infty(s)$.
	Moreover, $\liminf_{\algostep \to \infty} \lowerbound_\algostep(s, a_\algostep^{\max})$ is well defined and we can choose a subsequence of $a_\algostep^{\max}$ obtaining this limit.
	Also, since $\stateactions(s)$ is compact, there exists an accumulation point $a_\infty^{\max}(s) \in \stateactions(s)$ of this subsequence.
	Note that $a_\infty^{\max}(s) \in \maxU_\infty(s)$.
	Together, we get that
	\begin{align*}
		\bounddifference_\infty(s) & = \limsup_{\algostep \to \infty} \left( \upperbound_\algostep(s, a_\algostep^{\max}) - \lowerbound_\algostep(s, a_\algostep^{\max}) \right) \\
			& = \lim_{\algostep \to \infty} \upperbound_\algostep(s, a_\algostep^{\max}) - \liminf_{\algostep \to \infty} \lowerbound_\algostep(s, a_\algostep^{\max}) \\
			& = \upperbound_\infty(s, a_\infty^{\max}(s)) - \lowerbound_\infty(s, a_\infty^{\max}(s)) \\
			& = \bounddifference_\infty(s, a_\infty^{\max}(s)).
	\end{align*}

	Next, we will show that for a particular subset of states, we have that $\bounddifference$ satisfies the fixed point equation, i.e.\ $\bounddifference_\infty(s) = \ExpectedSumMDP{\mdptransitions}{s}{a_\infty^{\max}(s)}{\bounddifference_\infty}$, using our assumptions on $\getpair$.
	Let thus $\States_\infty = \{s \mid \forall \varepsilon > 0.\ \forall \algostep.\ \exists \algostep' > \algostep.\ \metricstates(s, s_{\algostep'}) < \varepsilon\}$ the set of all accumulation points of $s_\algostep$, i.e.\ all states to which the algorithm gets arbitrarily close infinitely often.
	By compactness of $\States$, this set is not empty.
	Note that despite the set of all sampled states being countable, $\States_\infty$ may be uncountable.
	Next, for each $s \in \States_\infty$, set $\Actions_\infty(s) = \{a \in \stateactions(s) \mid \forall \varepsilon > 0.\ \forall \algostep.\ \exists \algostep' > \algostep.\ \metricactions(a, a_{\algostep'}) < \varepsilon\}$.
	Observe that for any state $s \in \States_\infty$, this set is non-empty as well.
	Moreover, due to our assumption $\getpair$, we have with probability 1 that $\States^{\reach}_+ \subseteq \States_\infty$ and $\maxU_\infty(s) \subseteq \Actions_\infty(s)$ for all $s \in \States^{\reach}_+$.

	We now prove that for all $s \in \States^{\reach}_+$ we have that $\bounddifference_\infty(s) = \ExpectedSumMDP{\mdptransitions}{s}{a_\infty^{\max}(s)}{\bounddifference_\infty}$.
	Let $s \in \States^{\reach}_+$ arbitrary.
	From our assumption, we obtain a sequence of state-action pairs $(s_{\algostep_k}, a_{\algostep_k})$ which converges to $(s, a_\infty^{\max}(s))$ and $\mdptransitions(s_{\algostep_k}, a_{\algostep_k})$ converges in total variation to $\mdptransitions(s, a_\infty^{\max}(s))$.
	By definition of the algorithm, we have that $\bounddifference_{\algostep_k + 1}(s_{\algostep_k}, a_{\algostep_k}) =  \ExpectedSumMDP{\mdptransitions}{s_{\algostep_k}}{a_{\algostep_k}}{\bounddifference_{\algostep_k}}$.
	By uniform convergence and continuity of $\bounddifference_\algostep$, we get
	\begin{equation*}
		\lim_{k \to \infty} \bounddifference_{\algostep_k + 1}(s_{\algostep_k}, a_{\algostep_k}) = \lim_{k \to \infty} \bounddifference_\infty(s_{\algostep_k}, a_{\algostep_k}) = \bounddifference_\infty(s, a_\infty^{\max}(s)).
	\end{equation*}
	By total variation convergence, we obtain that
	\begin{equation*}
		\lim_{k \to \infty} \ExpectedSumMDP{\mdptransitions}{s_{\algostep_k}}{a_{\algostep_k}}{\bounddifference_\infty} = \ExpectedSumMDP{\mdptransitions}{s}{a_\infty^{\max}(s)}{\bounddifference_\infty}.
	\end{equation*}
	Together, we obtain the desired claim.

	Now, set $\bounddifference_* = \max_{s \in \States^{\reach}_+} \bounddifference_\infty(s)$ the maximal difference among all such accumulation points -- note that this maximum is obtained because the set $\States^{\reach}_+$ is closed and thus, as a subset of a compact space, also compact.
	Consequently, the set of witnesses $\States^{\reach}_* = \{s \in \States^{\reach}_+ \mid \bounddifference_\infty(s) = \bounddifference_*\}$ is non-empty.
	Recall that we assumed that the algorithm does not converge, in particular we have $\upperbound_\infty(\initialstate) - \lowerbound_\infty(\initialstate) > 0$.
	Hence, $\bounddifference_\infty(\initialstate) > 0$, too, and thus $\bounddifference_* > 0$, since clearly $\initialstate \in \States^{\reach}_+$.
	Next, we show for any state $s \in \States^{\reach}_*$ that $\mdptransitions(s, a_\infty^{\max}(s), \States^{\reach}_*) = 1$.
	Clearly, by definition of $\States^{\reach}_+$ we have $\mdptransitions(s, a_\infty^{\max}(s), \States^{\reach}_+) = 1$.
	However, since $\bounddifference_\infty(s) = \bounddifference_* = \ExpectedSumMDP{\mdptransitions}{s}{a_\infty^{\max}(s)}{\bounddifference_\infty}$, we necessarily have $\mdptransitions(s, a_\infty^{\max}(s), \States^{\reach}_*) = 1$, since for any other successor $s' \in \States^{\reach} \setminus \States^{\reach}_*$, we have $\bounddifference_\infty(s') < \bounddifference_*$.

	Let now $\strategy_*(s) = a_\infty^{\max}(s)$ for all $s \in \States$.
	By our \textbf{Absorption} assumption, we have $\ProbabilityMDP<\MDP, s><\strategy_*>[\reach (\targetset \union \sinkset)] = 1$.
	But, by the above reasoning, we also have that $\ProbabilityMDP<\MDP, s><\strategy_*>[\reach \setcomplement{\States^{\reach}_*}] = 0$ for any $s \in \States^{\reach}_*$.
	Together, this means that $S' = (\targetset \union \sinkset) \intersection \States^{\reach}_*$ satisfies $\ProbabilityMDP<\MDP, s><\strategy_*>[\reach S'] = 1$ for all $s \in \States^{\reach}_*$.
	Now, on the one hand we have that $\bounddifference_\infty(s) = \ExpectedSumMDP{\mdptransitions}{s}{a_\infty^{\max}(s)}{\bounddifference_\infty}$ for all $s \in \States^{\reach}_*$.
	On the other hand, we have that $\bounddifference_\infty(s) = 0$ for all $s \in \States_\infty \intersection (\targetset \union \sinkset)$, in particular for all $s \in S'$.
	Together, we get that $\bounddifference_\infty(s) = 0$ for all $s \in \States^{\reach}_*$, yielding the contradiction.
\end{proof}

}
{}

\end{document}